\theoremstyle{plain}
\newtheorem*{thm*}{Theorem}
\newtheorem*{eth-env}{Exponential Time Hypothesis}
\def\pure2dir{\textsc{Pure 2-Dir}\xspace}
\def\2dir{\textsc{2-Dir}\xspace}
\def\threedir{\textsc{3-Dir}\xspace}
\def\3sat{\textsc{3-Sat}\xspace}
\newcommand{\Coloring}[1]{\textsc{$#1$-Coloring}\xspace}
\newcommand{\LColoring}[1]{\textsc{List $#1$-Coloring}\xspace}
\def\mis{\textsc{Max Independent Set}\xspace}
\def\mds{\textsc{Min Dominating Set}\xspace}
\def\mids{\textsc{Min Independent Dominating Set}\xspace}
\def\mcds{\textsc{Min Connected Dominating Set}\xspace}
\def\mcli{\textsc{Max Clique}\xspace}
\def\mvc{\textsc{Min Vertex Cover}\xspace}
\def\fvs{\textsc{Min Feedback Vertex Set}\xspace}
\def\kColoring{$k$-\textsc{Coloring}\xspace}
\def\LkColoring{\textsc{List} $k$-\textsc{Coloring}\xspace}
\def\kfvs{\textsc{$k$-Feedback Vertex Set}\xspace}
\def\kpath{\textsc{$k$-Path}\xspace}
\def\kcycle{\textsc{$k$-Cycle}\xspace}
\def\ekcycle{\textsc{Exact $k$-Cycle}\xspace}
\renewcommand{\leq}{\leqslant}
\renewcommand{\geq}{\geqslant}
\title{Optimality Program in Segment and String Graphs}
\author[1]{\'Edouard Bonnet}
\author[2]{Pawe\l{} Rz\k{a}\.zewski}
\affil[1]{Department of Computer Science, Middlesex University, London}
\affil[2]{Faculty of Mathematics and Information Science,\\
		Warsaw University of Technology\\
\texttt{edouard.bonnet@dauphine.fr}, \texttt{p.rzazewski@mini.pw.edu.pl}}		
\authorrunning{\'E. Bonnet and P. Rz\k{a}\.zewski}
\subjclass{G.2.2 Graph Theory, F.2.2 Nonnumerical Algorithms and Problems}
\keywords{(unit) segment graphs, string graphs, coloring, maximum independent set, minimum dominating set, exact algorithms, subexponential algorithms, ETH lower bounds}
\begin{document}

\maketitle

\begin{abstract}
  Planar graphs are known to allow subexponential algorithms running in time $2^{O(\sqrt n)}$ or $2^{O(\sqrt n \log n)}$ for most of the paradigmatic problems, while the brute-force time $2^{\Theta(n)}$ is very likely to be asymptotically best on general graphs. 
  Intrigued by an algorithm packing curves in $2^{O(n^{2/3}\log n)}$ by Fox and Pach [SODA'11], we investigate which problems have subexponential algorithms on the intersection graphs of curves (string graphs) or segments (segment intersection graphs) and which problems have no such algorithms under the ETH (Exponential Time Hypothesis).
  Among our results, we show that, quite surprisingly, \Coloring{3} can also be solved in time $2^{O(n^{2/3}\log^{O(1)}n)}$ on string graphs while an algorithm running in time $2^{o(n)}$ for \Coloring{4} even on axis-parallel segments (of unbounded length) would disprove the ETH.
  For \Coloring{4} of unit segments, we show a weaker ETH lower bound of $2^{o(n^{2/3})}$ which exploits the celebrated Erd\H{o}s-Szekeres theorem.
  The subexponential running time also carries over to \fvs but not to \mds and \mids. 
\end{abstract}

\section{Introduction}

Most combinatorial optimization and decision problems admit subexponential algorithms when restricted to planar graphs.
More precisely, they can be solved in time $2^{O(\sqrt n)}$, or $2^{\tilde{O}(\sqrt n)}$ on planar graphs with $n$ vertices, while under the ETH (Exponential Time Hypothesis, which asserts that \3sat cannot be solved in subexponential time \cite{ImpagliazzoPaturi,DBLP:journals/jcss/ImpagliazzoPZ01}) they do not admit an algorithm running in time $2^{o(n)}$ on general graphs.
The former is due to the facts that planar graphs have treewidth $O(\sqrt n)$ and that we have efficient algorithms parameterized by the treewidth $\text{tw}$ of the graph, namely running in $2^{O(\text{tw})}n^{O(1)}$, or $2^{\tilde{O}(\text{tw})}n^{O(1)}$.

The so-called bidimensionality theory \cite{DemaineH04,DemaineFHT04,DemaineH08,DemaineH08a} pushes this speed-up further by yielding $2^{O(\sqrt k)}n^{O(1)}$ algorithms where $k$ is the targeted size of a solution (think for example of the problems of finding a maximum independent set or a minimum dominating set of size $k$).
In a nutshell, it exploits a deep structural result by Robertson, Seymour, and Thomas \cite{DBLP:journals/jct/RobertsonST94}: planar graphs with treewidth $\text{tw}$ have a $\Theta(\text{tw})$-by-$\Theta(\text{tw})$ grid as a minor (i.e., any graph obtained by deleting vertices and edges, and contracting edges).
Thus, if the presence of a large grid minor makes the problem trivial (as in, one can always answer yes or always answer no), then one only has to solve efficiently instances with low treewidth; which, as we noted, can often be done.
The claimed running time is obtained by defining large grids as $\Theta(\sqrt k)$-by-$\Theta(\sqrt k)$, since their absence as minors imply that the treewidth is in $O(\sqrt k)$.
The bidimensionality theory is extremely versatile.
It also gives approximation schemes and linear vertex kernels and could be generalized to graphs with bounded genus and graphs excluding a fixed minor \cite{DemaineFHT05b}.

A natural line of research is to generalize or extend the subexponential (parameterized) algorithms to classes of graphs which do not fall into those categories. For geometric intersection graphs, the situation is much richer than for planar graphs.
For instance, Marx and Pilipczuk already observed that packing problems (of the kind of \mis) are more broadly subject to subexponential algorithms -- running typically in $n^{O(\sqrt k)}$ -- than covering problems (of the kind of \mds) -- for which $n^{O(k)}$ is essentially optimal under the ETH \cite{MarxP15,MarxP15a}.

We briefly survey the existing results in the design of subexponential algorithms on geometric intersection graphs.
A prominent role is played by intersection graph of families of {\em fat objects}, i.e., objects for which the aspect ratio (their length divided by their width) is bounded.
We highlight that fat objects, and in particular disks and squares, often allow faster algorithms and the so-called square-root phenomenon.
As we will see, subexponential algorithms are less frequent on intersection graphs of curves and segments but nevertheless present such as exemplified by \mis, \Coloring{3}, and \fvs.

\paragraph*{Subexponential algorithms on geometric intersection graphs.}
By a {\em ply} of a family of geometric objects we denote the maximum number of objects covering a single point.
Smith and Wormald show that for any collection of $n$ convex fat objects with ply $p$ there is a balanced separator of size $O(\sqrt{n p})$ \cite{SmithW98}. This leads to subexponential algorithms when the ply is constant, or in general for problems becoming trivial when the ply is too large, such as \Coloring{k}.
The $2^{\tilde{O}(\sqrt{n k})}$-time algorithm that this win-win provides for coloring $n$ fat objects, say disks, with $k$ colors is shown essentially optimal under the ETH by Bir\'{o} et al. \cite{Biro17}.

A next step may consist of designing FPT\footnote{with running time $f(k)n^{O(1)}$} or XP\footnote{with running time $n^{f(k)}$} algorithms where the dependency in the parameter is subexponential (for problems of the form ,,find $k$ vertices such that...``).
Using a shifting argument à la Baker \cite{Baker}, Alber and Fiala obtain a $n^{O(\sqrt k)}$-time to decide if one can find $k$ disjoint unit disks or squares among $n$ \cite{AlberF04}.
Marx and Pilipczuk generalize this result to packing $k$ disjoint polygons among $n$ in the same time \cite{MarxP15,MarxP15a}.
Their approach is based on guessing a small separator in the medial axis (i.e., the Voronoi diagram of polygons) of a supposed solution, as suggested by Adamaszek and Wiese and Har-Peled to obtain QPTAS for geometric packing problems \cite{AdamaszekW14,HarPeled14,AdamaszekHW17}.

Marx showed that \mis and \mds in the intersection graphs of disks or squares are W[1]-complete, and therefore unlikely to be FPT \cite{Marx06}.
Those reductions also show that the $n^{O(\sqrt k)}$ algorithms \cite{MarxP15,MarxP15a} are essentially optimal under the ETH. 
Fomin et al. \cite{FominLS12} observed that unit disks of bounded degree have treewidth $O(\sqrt n)$ and used this fact to extend bidimensionality to unit disk graphs for a handful of problems.  
Recently, a superset of the previous authors gave $2^{O(\sqrt k)}n^{O(1)}$-time algorithms for \kfvs, \kpath, \kcycle, \ekcycle \cite{FominLPSZ17}.
%
%
\paragraph*{Non-fat objects: segments and strings.}

Segment intersection graphs (or segment graphs in short) are the intersection graphs of straight-line segments in the plane.
They are called unit segments if all the segments of a representation share the same length.
For a fixed integer  $k$, $k$-{\sc Dir} is defined as the set of intersection graphs of segments, each parallel to one of fixed $k$ directions. Strings graphs are the intersection graphs of simple curves in the plane.
Those curves can be assumed polygonal without loss of generality.
The vertices of the polygonal curves in a geometric representation are called \emph{geometric vertices} not to confuse them with the actual vertices of the graph. 
As shown by Kratochv\'{i}l and Matou\v{s}ek, there are string graphs with $n$ vertices, which require $2^{\Omega(n)}$ geometric vertices in any string representation with polygonal curves~\cite{KratochvilM91}.

A systematic study of segment graphs and their subclasses was initiated by Kratoch\'{i}l and Matou\v{s}ek~\cite{KRATOCHVIL1994289}. It is interesting to point out that every planar graph is a segment graph, as shown by Chalopin and Gon\c{c}alves~\cite{ChalopinG09} (this was a long-standing conjecture by Scheinerman~\cite{Scheinerman}).

The class of string graphs is very general, as it includes split graphs (i.e., graphs whose vertices can be partitioned into two sets inducing a clique and an independent set), intersection graphs of bodies (i.e., compact shapes with non-empty interior), or incomparability graphs (i.e., graphs whose vertex set is given by the set of elements of a poset, and edges join elements that are incomparable).

Bir\'{o} et al. showed that even though coloring disks or more generally fat objects with a constant number of colors can be solved in $2^{\tilde{O}(\sqrt n)}$ \cite{Biro17}, 6-coloring axis-parallel segments (\2dir) in time $2^{o(n)}$ would refute the ETH. This suggests that subexponential algorithms are less frequent on the intersection graphs of non-fat objects such as segments and strings. On the other hand, Fox and Pach presented a subexponential algorithm for \mis on string graphs \cite{FoxP11}. Their approach uses a win-win strategy and is based on the existence of balanced separators in string graphs. 
Fox, Pach, and T\'oth showed that string graphs with $m$ edges have balanced separators of size $O(m^{3/4} \; \log m)$, and conjectured that there is always a separator of size $O(\sqrt{m})$ \cite{DBLP:journals/jct/FoxPT10}. Matou\v{s}ek showed that string graphs admit a balanced separator of size $O(\sqrt{m} \; \log m)$~\cite{DBLP:journals/cpc/Matousek14}. Finally, very recently Lee improved the result of Matou\v{s}ek, proving the conjecture.
\begin{theorem}[Lee \cite{Lee16}]\label{thm-stringsep}
Every string graph with $m$ edges has a balanced separator of size $O(\sqrt{m})$. Moreover, it can be found in polynomial time, provided that the geometric representation is given.
\end{theorem}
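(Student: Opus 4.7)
The plan is to follow the multi-commodity flow / vertex expansion framework used by Matou\v{s}ek for the $O(\sqrt{m} \log m)$ bound, but to exploit the geometric representation more tightly in order to eliminate the logarithmic loss.

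First, I would reduce the separator problem to a flow/expansion estimate. By the standard duality between balanced vertex separators and concurrent multi-commodity flows (as in Feige--Hajiaghayi--Lee), to show that any string graph $G$ with $m$ edges has a balanced separator of size $O(\sqrt m)$ it suffices to prove that if every balanced separator of $G$ has size $\Omega(s)$, then one can route a fractional uniform multi-commodity flow whose total congestion forces $s = O(\sqrt m)$. In particular, a solution to the natural LP relaxation simultaneously produces the separator in polynomial time. So the combinatorial task is to upper bound the flow-cut gap for \emph{vertex} expansion on string graphs by a constant.

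Second, I would feed the geometric representation into the argument. Two structural facts about strings are crucial: (i)~the Pach--T\'oth-style bound that string graphs exclude a complete bipartite subgraph $K_{t,t}$ of super-polynomial size, which rules out the dense ``expander-like'' configurations that would obstruct a small separator, and (ii)~a direct topological sweep of the drawing: one can move a short Jordan arc through the plane and look at the number of curves it crosses; the minimum of this profile over all balanced sweeps is controlled by an integration/double counting against the $m$ pairwise intersections, and yields a cut of size $O(\sqrt m)$. Step (ii) is essentially a geometric incarnation of the Lipton--Tarjan style argument, adapted to the ``edges = crossings'' counting that is native to string graphs.

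Third, to remove the $\log m$ factor present in Matou\v{s}ek's bound, I would replace the $\ell_1$ embedding step (which inherently costs a Bourgain-type $O(\log n)$) by a probabilistic partition performed directly in the plane of the drawing, using the topology of the surface rather than an abstract metric. The hope is that a random radial/annular decomposition, analysed against the crossing count, gives a flow-cut gap of $O(1)$ for vertex expansion on string graphs.

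The main obstacle, and the heart of the argument, will be the last step: pushing the flow-cut gap all the way down to a constant. Matou\v{s}ek's approach naturally loses a $\sqrt{\log m}$ factor because it passes through a metric embedding; eliminating this requires an intrinsically planar-topological partitioning scheme that avoids any union bound or iterated halving over $\Theta(\log n)$ scales. I expect that this is where the geometric representation does essential work, so that one can never replace strings by an abstract metric space without losing the sharp $\sqrt m$ bound. Once the existence proof is pinned down, the algorithmic statement follows by solving the vertex-expansion LP in polynomial time and rounding with the same partition scheme used in the analysis.
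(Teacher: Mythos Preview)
The paper does not prove this statement at all: Theorem~\ref{thm-stringsep} is quoted from Lee~\cite{Lee16} as an external result and is used as a black box throughout Section~\ref{sec:upper}. There is therefore no ``paper's own proof'' to compare your proposal against. Your write-up is not so much a proof as a research plan for reproving Lee's theorem, and you yourself flag that the decisive step---obtaining a constant flow--cut gap for vertex expansion on string graphs via an intrinsically planar partitioning scheme---is exactly the part you do not have. That is the whole content of Lee's result; the reduction to a flow/expansion statement and the observation that Matou\v{s}ek's $\ell_1$-embedding step costs a logarithm are well known, and neither gets you past $O(\sqrt{m}\log m)$. So as a proof this has a genuine gap at its core.

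For context, Lee's actual argument does not proceed by a ``random radial/annular decomposition'' in the drawing. He works in the more general setting of region intersection graphs over an excluded-minor host, and bounds the vertex expansion by constructing conformal vertex weights and controlling the observable diameter; the geometry of the plane enters only through the fact that the host (the planar graph of the arrangement) is minor-closed. If you want to pursue this, the right thread is the connection between vertex-congestion embeddings and $L_1$-embeddings of graph metrics with an excluded minor, not a direct topological sweep of the curves.
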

Let us point out that this result generalizes the famous planar separator theorem by Lipton and Tarjan \cite{DBLP:journals/siamcomp/LiptonT80}, as planar graphs are string graphs and the number of edges in a planar graph is linear in the number of vertices. This also shows that Theorem \ref{thm-stringsep} is best possible (up to the constants), as the planar separator theorem is asymptotically tight.
\paragraph*{Our contributions.}
We show that the subexponential algorithm for \mis in string graphs by Fox and Pach \cite{FoxP11}, running in time  $2^{\tilde{O}(n^{2/3})}$, can be extended to \Coloring{3} and \fvs.
As in the algorithm of Fox and Pach, the central idea is a win-win: either the graph is rather sparse and the separator of Theorem~\ref{thm-stringsep} gives a speed-up, or the graph has a high-degree vertex (used for \Coloring{3}) or a large biclique (used for \fvs) and an efficient branching can be performed.
Refining a lower bound of Biro et al.~\cite{Biro17}, we complement this former result by showing that for any $k\geq 4$, \kColoring cannot be solved in $2^{o(n)}$ even on axis-parallel segments, unless the ETH fails.
The reduction relies on having segment lengths with two different orders of magnitude.
We therefore ask if unit segments could allow a faster algorithm for \kColoring for $k\geq 4$.
Under the ETH, we provide a stronger lower bound than the one for planar graphs (which refutes a running time $2^{o(\sqrt n)}$) and show that unit segments cannot be $k$-colored in $2^{o(n^{2/3})}$ for any $k \geq 4$.
Our construction uses the fact, closely related to the famous Erd\H{o}s-Szekeres~\cite{Erdos1987} theorem, that any permutation on $n$ totally ordered elements can be partitioned into $O(\sqrt n)$ monotone subsequences (see Brandst\"{a}dt and Kratsch~\cite{BrandstadtK86}).

We then give tight ETH lower bounds for \textsc{Min (Connected) Dominating Set} and \mids on segments and \mcli on strings.
For that, we design reductions whose number $n$ of produced segments is linear in $N+M$ from satisfiability problems with $N$ variables and $M$ clauses.
Indeed, the sparsification lemma of Impagliazzo et al. \cite{Sparsification} implies that those satisfiability problems are not solvable in $2^{o(N+M)}$ unless the ETH fails; which enables us to conclude that the problems are not solvable in $2^{o(n)}$ under the ETH, on graphs with $n$ vertices.

Although the NP-hardness of the aforementioned problems is known for segment intersection graphs \cite{ZverovichZ95,CabelloCL13}, getting such linear reductions might be difficult.

For instance, while it is known that planar graphs are a subclass of segment intersection graphs \cite{ChalopinG09}, implying the NP-hardness of all the problems of Table~\ref{tab:results-table} except \Coloring{k} for $k \geqslant 4$ and \mcli, this fact does not serve our purpose since they can be solved in time $2^{O(\sqrt n)}$ on planar graphs.
The situation is an interesting intermediate between planar and general graphs.
Our objects \emph{can} intersect but we cannot afford crossover gadgets (at least not quadratically many). 
Certain intersections create unwanted edges of which we have to tame the importance.
It is also noteworthy that segment/string graphs cannot be expanders since if they have constant degree, by Theorem~\ref{thm-stringsep}, they have treewidth $\tilde{O}(\sqrt n)$.
Hence, we are deprived of the \emph{usual hardest instances}.

\begin{table}[h!]
\centering
\begin{tabular}{l  l l}
\toprule
Problem & Upper bound & Lower bound \\
\midrule
\mis & $2^{\tilde{O}(\sqrt n)}p^{O(1)}$, $2^{\tilde{O}(n^{2/3})}$ & $2^{o(\sqrt n)}$ \\
\Coloring{3} & $\mathbf{2^{\tilde{O}(n^{2/3})}}$ & $2^{o(\sqrt n)}$ \\
\kColoring for every $k \geq 4$ & $2^{O(n)}$ & $\mathbf{2^{o(n)}}$ ~~~(even in \2dir) \\
\kColoring for every $k \geq 4$ & $2^{O(n)}$ & $\mathbf{2^{o(n^{2/3})}}$ in unit \threedir \\
\fvs & $\mathbf{2^{\tilde{O}(n^{2/3})}}$ & $2^{o(\sqrt n)}$ \\
\textsc{\footnotesize{Min} \small{(Connected)}\footnotesize{} Dominating Set} & $2^{O(n)}$ & $\mathbf{2^{o(n)}}$  \\
\mids & $2^{O(n)}$ & $\mathbf{2^{o(n)}}$ \\
\mcli & $2^{O(n)}$ & $\mathbf{2^{o(n)}}$ \\
\bottomrule
\end{tabular}
\caption{Upper and lower bounds for classical problems on string and segment graphs.
The \textbf{upper bounds} work on \textbf{string graphs}.
The \textbf{lower bounds} are designed on \textbf{segment graphs}, unless precised otherwise.
Our results are written in bold.
By $p$ we denote the number of geometric vertices if a geometric representation is given.}
\label{tab:results-table}
\end{table}
\paragraph*{Geometric representation and robust algorithms.}
In case of graphs with geometric representations, it is important to distinguish between a geometric intersection graph (i.e., a pure abstract structure, for which we know that some geometric representation exists), and the representation itself.
Note that this is not the case with planar graphs, as finding a plane embedding can be done in linear time \cite{DBLP:journals/jgaa/BoyerM04}.

Finding a segment or string representation of a graph was shown to be NP-hard by Kratochv\'{i}l \cite{DBLP:journals/jct/Kratochvil91a}, and Kratochv\'{i}l and Matou\v{s}ek  \cite{KRATOCHVIL1994289}, respectively. However, it was very unclear if the problems are in NP (which is usually the trivial part of an NP-completeness proof).
As mentioned above, Kratochv\'{i}l and Matou\v{s}ek \cite{KratochvilM91} showed that some string graphs require a representation of exponential size, which proved that the simple idea of exhaustively guessing the representation cannot work for this problem. Finally, the NP-membership of recognizing string graphs was proven by Schaefer, Sedgwick, and \v{S}tefankovi\v{c} \cite{DBLP:journals/jcss/SchaeferSS03}.

The story of recognizing segment graphs is even more interesting. On the first sight, the situation seems simpler than for strings, as the number of geometric points in a segment representation is clearly polynomial in $n$. However, it appears that there are segment graphs, whose every segment representation requires points with coordinates doubly exponential in $n$, i.e., using $2^{\Omega(n)}$ digits (see Kratochv\'{i}l and Matou\v{s}ek \cite{KRATOCHVIL1994289}, and McDiarmid and M\"{u}ller \cite{DBLP:journals/jct/McDiarmidM13}). Finally, the problem was shown to be complete for the class $\exists \mathbb{R}$~
(see Schaefer and \v{S}tefankovi\v{c}~\cite{Schaefer2017}), i.e., the class of problems reducible in polynomial time to deciding if a given existential formula over the reals is true. This is a strong evidence that the problem is not in NP. For a very nice exposition of the $\exists \mathbb{R}$-completeness proof, see Matou\v{s}ek \cite{DBLP:journals/corr/Matousek14}.

All this shows that a requirement of an explicit geometric representation of an input graph may be a serious drawback of an algorithm. We call an algorithm {\em robust} if it takes only an abstract structure as an input, and either computes the solution, or concludes (correctly) that the input graph does not belong to the desired class.
On the one hand, our algorithms (see Section \ref{sec:upper}) are robust, but work slightly faster if the input is given along with the geometric representation. On the other hand, the lower bounds (see Section \ref{sec:lower}) hold even if the geometric representation is given explicitly.

\section{Upper bounds} \label{sec:upper}

Fox and Pach showed that, on string graphs, a maximum independent set can be computed in subexponential time:

\begin{theorem}[Fox \& Pach \cite{FoxP11}]
\mis can be solved in time $2^{O(n^{2/3} \log n)}$ in string graphs with $n$ vertices.
\end{theorem}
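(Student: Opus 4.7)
The plan is a win-win recursion combining Theorem~\ref{thm-stringsep} with a Ramsey-type biclique lemma for string graphs established by Fox and Pach~\cite{FoxP11}. Given a string graph $G$ on $n$ vertices and $m$ edges, I branch on whether $m$ is below or above the threshold $m_0 := n^{4/3}$.

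In the sparse regime $m \leq n^{4/3}$, I apply Theorem~\ref{thm-stringsep} to obtain in polynomial time a balanced separator $S$ of size $O(\sqrt m) = O(n^{2/3})$ splitting $V(G) \setminus S$ into two sides $A$ and $B$ of size at most $2n/3$ each, with no edges between them. I then enumerate every subset $I \subseteq S$ independent in $G[S]$, of which there are at most $2^{|S|} = 2^{O(n^{2/3})}$; for each $I$, I recursively compute the maximum independent sets $\alpha_A$ and $\alpha_B$ of $G[A \setminus N(I)]$ and $G[B \setminus N(I)]$ (both still string graphs), and return the largest value of $|I| + \alpha_A + \alpha_B$ over all choices of $I$.

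In the dense regime $m > n^{4/3}$, I invoke the Fox--Pach biclique lemma, which guarantees that such a $G$ contains a $K_{t,t}$ with $t = \Omega(n^{1/3}/\log n)$, locatable in polynomial time. Since any independent set meets at most one side of this biclique, I branch in two by deleting either side of the $K_{t,t}$ and recurse on a graph with at most $n - t$ vertices.

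Setting $f(n) := \log_2 T(n)$, the sparse recurrence $f(n) \leq c\, n^{2/3} + f(2n/3) + 1$ telescopes via a convergent geometric series to $O(n^{2/3})$, while consecutive dense steps occur at most $n/t = O(n^{2/3}\log n)$ times before sparsity is restored, each contributing a factor of $2$ for an additional $O(n^{2/3}\log n)$ in $f$. Combined, $T(n) = 2^{O(n^{2/3}\log n)}$. The main obstacle is the dense-case biclique lemma itself: the classical K\H{o}v\'ari--S\'os--Tur\'an estimate alone is too weak to yield $t = \Omega(n^{1/3}/\log n)$, and one must exploit the topological structure of string intersection, as Fox and Pach do via a Ramsey-type argument for curves in the plane; with that in hand, the separator-and-branching recursion is routine.
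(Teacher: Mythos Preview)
Your win-win strategy is sound and reaches the claimed bound, but the dense case is handled with heavier machinery than the paper uses. In the paper's argument, the dense regime is detected by the presence of a single vertex of degree at least $n^{1/3}$ (which exists whenever $m > n^{4/3}/2$), and one simply branches on whether that vertex is in the solution or not, yielding $F(n)\le F(n-1)+F(n-\lceil n^{1/3}\rceil-1)$. This is entirely elementary and needs nothing about the topology of strings beyond Theorem~\ref{thm-stringsep}. Your route instead invokes the $K_{t,t}$-freeness bound (Theorem~\ref{thm-string-Ktt}, due to Lee, with a slightly weaker version in~\cite{FoxPachArxiv}, not~\cite{FoxP11}) to extract a biclique and branch on which side is disjoint from the optimum. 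That is exactly the technique the paper reserves for \fvs in Theorem~\ref{thm-fvs-string}, where there is no obvious subexponential branching on a high-degree vertex; for \mis it is overkill.

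Two small points to tighten. First, the biclique is not known to be ``locatable in polynomial time''; the paper finds it by exhaustive guessing in time $n^{O(t)}=2^{O(n^{2/3}\log n)}$, which is within budget, so the slip is harmless but should be corrected. Second, your recursion analysis separates ``consecutive dense steps'' from sparse ones, but the two regimes interleave arbitrarily; a cleaner way is to verify directly that $f(n)=C\,n^{2/3}\log n$ satisfies both recurrences simultaneously for large enough $C$.
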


In their paper, they give a worse running time than the one claimed above.
This is because they used the $O(m^{3/4} \log m)$ separator theorem \cite{DBLP:journals/jct/FoxPT10}, which has been recently improved to $O(\sqrt{m})$ \cite{Lee16}.
The algorithm is a simple win-win argument.
If there is a vertex with degree at least $n^{1/3}$, then either removing it or selecting it and removing its neighbors gives a branching 
$F(n) \leq F(n-1) + F(n-\lceil n^{1/3} \rceil-1)$.
Otherwise, if all the vertices have degree smaller than $n^{1/3}$, the graph is rather sparse and the balanced separator of size $O(\sqrt m)=O(n^{2/3})$ provides an efficient divide-and-conquer.
The threshold $n^{1/3}$ is computed so that it balances the running time of those two subroutines and gives the claimed overall asymptotic time. 

This result was somewhat improved by Marx and Pilipczuk \cite{MarxP15,MarxP15a} based on an approach introduced by Adamaszek, Har-Peled, and Wiese \cite{AdamaszekHW17} to get QPTAS for geometric problems.
However, their algorithm necessitates that the string graph is given with a representation by polygonal curves on a polynomial number of geometric vertices.
\begin{theorem}[Marx \& Pilipczuk \cite{MarxP15}]
\mis can be solved in time $2^{O(\sqrt n \log n)}p^{O(1)}$ in string graphs with $n$ vertices, where the strings are given as polygonal curves on a total of $p$ geometric vertices.
\end{theorem}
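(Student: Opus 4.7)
The plan is to combine Lee's balanced-separator theorem with a refined Adamaszek--Har-Peled--Wiese style geometric argument that guesses a separating curve \emph{drawn through the arrangement}, rather than a separator of the abstract graph. Since the strings are given as polygonal curves with $p$ geometric vertices in total, the arrangement $\mathcal{A}$ of the $n$ curves has total complexity polynomial in $p$, so all geometric primitives used below can be computed in $p^{O(1)}$ time.

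The structural heart of the proof is the following: for any (unknown) optimal independent set $S^*$, there exists a simple Jordan curve $\gamma$ in the plane such that (i) $\gamma$ crosses only $O(\sqrt n)$ of the input curves, (ii) at most $\tfrac{2}{3}|S^*|$ curves of $S^*$ lie entirely inside $\gamma$ and at most $\tfrac{2}{3}|S^*|$ lie entirely outside, and (iii) $\gamma$ is made of $O(\sqrt n)$ elementary pieces, each of which either follows a subarc of an input curve or travels through a single face of $\mathcal{A}$ between two curves. To produce such a $\gamma$, I would triangulate the free region of $\mathcal{A}$, form a planar graph whose faces are weighted so that the total weight equals $|S^*|$ rather than the arrangement complexity, invoke the weighted planar separator theorem, and lift the resulting graph-separator back to a curve, rerouting any piece that lies inside some curve of $S^*$ along the boundary of that curve. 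The weighting is precisely what forces the bound $O(\sqrt n)$ instead of $O(\sqrt p)$.

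The algorithm then enumerates $\gamma$ by brute force. By (iii), $\gamma$ is encoded by a cyclic sequence of $O(\sqrt n)$ \emph{events}, each specifying a pair of input curves between which the next piece of $\gamma$ runs; there are at most $n^{O(\sqrt n)} = 2^{O(\sqrt n \log n)}$ such sequences, and for each one the actual geometric curve can be realized in $p^{O(1)}$ time by shortest-path computations inside $\mathcal{A}$. Having fixed $\gamma$, one removes the $O(\sqrt n)$ curves it crosses (paying a $2^{O(\sqrt n)}$ factor to guess which ones belong to $S^*$) and recurses on the subsets of curves lying inside and outside $\gamma$. The recursion $T(n) \le 2^{O(\sqrt n \log n)} p^{O(1)} \cdot 2\,T(2n/3)$ telescopes to the claimed $2^{O(\sqrt n \log n)} p^{O(1)}$ bound.

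The main obstacle, as emphasized above, is condition (ii) combined with making the separator's complexity depend on $n$ and not on $p$. A naive separator of the triangulated arrangement would have size $\Omega(\sqrt p)$, which could be exponential in $n$ for general string graphs; rerouting along the curves of $S^*$ and weighting the planar separator instance by $S^*$ is what saves the argument. A secondary subtlety is that the inside and outside subproblems must be genuine string-graph instances on strictly fewer curves, which is automatic once the crossed curves have been deleted, but requires a careful accounting to ensure that recursive calls see the same $p^{O(1)}$ overhead rather than a blow-up.
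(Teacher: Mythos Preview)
The paper does not prove this theorem; it is quoted from Marx and Pilipczuk and accompanied only by a one-sentence sketch: one exhaustively guesses a small balanced face-separator in the \emph{Voronoi diagram of a supposed fixed solution} and recurses on the inside and the outside. Your proposal is in the same divide-and-conquer spirit, but it replaces the Voronoi diagram of $S^*$ by the full arrangement $\mathcal{A}$ together with a weighted planar separator. That is a legitimate variant, and indeed closer in flavour to the Adamaszek--Har-Peled--Wiese QPTAS machinery you cite, but as written it contains a real gap.

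The problematic claim is (i): that the separating curve $\gamma$ crosses only $O(\sqrt n)$ of the \emph{input} curves. Your justification is a weighted planar separator on a triangulation of $\mathcal{A}$, with weights concentrated on $S^*$, followed by ``rerouting along the boundary'' of curves of $S^*$. This controls how many curves of $S^*$ the noose meets, but it says nothing about the curves \emph{not} in $S^*$: those may be arbitrarily entangled, and a noose that is balanced for $S^*$ can be forced to cross unboundedly many of them. Consequently the step ``remove the $O(\sqrt n)$ curves $\gamma$ crosses'' is unfounded, and so is the encoding of $\gamma$ by $O(\sqrt n)$ face-events of $\mathcal{A}$: each face-piece lives in a single face of the full arrangement, and along a short noose that arrangement can already have $\Theta(p)$ faces.

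In the actual Marx--Pilipczuk argument this issue is sidestepped precisely by working with the Voronoi diagram of $S^*$ rather than with $\mathcal{A}$. The noose is a cycle in a planar graph with $|S^*|\le n$ faces, hence visiting $O(\sqrt n)$ Voronoi cells; it is enumerated by guessing which $O(\sqrt n)$ objects own those cells, at cost $n^{O(\sqrt n)}$. Any curve of $S^*$ not among the guessed ones lies entirely inside its own Voronoi cell and hence entirely on one side of the noose; curves that do cross the noose are therefore certifiably outside $S^*$ and may be discarded, \emph{regardless of how many there are}. Your arrangement-based route can likely be repaired along these lines, but the enumeration must be over solution objects (or some other $n$-bounded combinatorial proxy), not over crossings or faces of the full arrangement.
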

In a nutshell, the idea is to exhaustively guess a small balanced face-separator in the Voronoi diagram of a supposed (although not known) fixed solution, and solve recursively the two subinstances in the inside and outside of this separator.

If this approach does not seem to generalize easily to coloring problems, the win-win of Fox and Pach can be transported to \Coloring{3} with a bit more arguments.

\begin{theorem} \label{thm-3col-string}
\Coloring{3} (even \LColoring{3}) of a string graph with $n$ vertices can be decided in time:
\begin{compactitem}
\item  $2^{O(n^{2/3} \log n)}$, if the geometric representation is given,
\item $2^{O(n^{2/3} \log^2 n)}$, even without geometric representation.
\end{compactitem} 
\end{theorem}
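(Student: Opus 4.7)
Following the Fox--Pach win-win template for \mis, start by exhaustively unit-propagating: as long as some vertex $v$ has $|L(v)|=1$, with $L(v)=\{c\}$, delete $v$ and remove $c$ from $L(u)$ for every $u\in N(v)$. If a list becomes empty, return \no. If after propagation no vertex has $|L(v)|=3$, the residual instance is a 2-SAT formula (one Boolean per vertex for its two candidate colors, one clause per edge forbidding the common color) and is solved in polynomial time. Otherwise let $V_3:=\{v\mid |L(v)|=3\}$, write $n_3:=|V_3|$, and set the threshold $d:=n_3^{1/3}$.

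The algorithm now splits into two cases. In the \emph{high-degree case}, some $v\in V_3$ has at least $d$ neighbors in $V_3$; branch on the three colors of $L(v)$ and recurse. For \Coloring{3} with full lists, each $V_3$-neighbor sees its list shrink from $\{1,2,3\}$ to a $2$-element set and leaves $V_3$, so $n_3$ decreases by at least $d+1$ in every branch. For \LColoring{3} the same conclusion holds after a one-shot preprocessing that deletes every edge $uv$ with $L(u)\cap L(v)=\emptyset$ (vacuously satisfied) together with an averaging over the three branches of $L(v)$ to select a color that affects $\Omega(d)$ neighbors. In the \emph{low-degree case}, every $v\in V_3$ has fewer than $d$ $V_3$-neighbors, so the subgraph $G_3:=G[V_3]$ --- itself a string graph --- has at most $n_3 d/2 = O(n_3^{4/3})$ edges. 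Theorem~\ref{thm-stringsep} then supplies a balanced separator $S\subseteq V_3$ of size $O(n_3^{2/3})=O(n^{2/3})$, splitting $V_3\setminus S$ into parts $A,B$ of size at most $\tfrac23 n_3$ each. We enumerate all $3^{|S|}=2^{O(n^{2/3})}$ list-compatible colorings of $S$, propagate, and recurse on $A$ and on $B$ independently.

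The key structural point is that all branching happens on $V_3$: the $2$-list vertices are never assigned, only accumulating extra 2-SAT clauses from each branch. Since $S$ separates $V_3$ in $G_3$, the recursive calls on $A$ and on $B$ touch disjoint $3$-list vertices and are independent; at every leaf the global 2-SAT instance over the surviving $2$-list vertices, enriched by the clauses collected along the root-to-leaf path, is solved in polynomial time. Along any such path there are $O(\log n)$ low-degree steps (each shrinks $n_3$ by a $2/3$ factor) and $O(n^{2/3})$ high-degree steps (each shrinks $n_3$ by $\Omega(n_3^{1/3})$, and their values telescope to $O(n)$). Multiplying branching factors bounds the number of leaves by $\bigl(2^{O(n^{2/3})}\bigr)^{O(\log n)}\cdot 3^{O(n^{2/3})} = 2^{O(n^{2/3}\log n)}$, each doing polynomial work; this gives the bound $2^{O(n^{2/3}\log n)}$ when the geometric representation is provided and the separator is polynomial-time computable. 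Without the representation, every low-degree step must brute-force a separator among $\binom{n}{O(n^{2/3})}=2^{O(n^{2/3}\log n)}$ candidate subsets, inflating the running time to $2^{O(n^{2/3}\log^2 n)}$.

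\emph{Main obstacle.} The subtle point --- and the reason the extension from \mis to \Coloring{3} needs "a bit more arguments" --- is the coupling between the two recursive halves caused by $2$-list vertices, which are \emph{not} separated by $S$. One has to show that carrying the 2-SAT component as an evolving background and resolving it only at the leaves is sound (the conjunction of the sides' clauses is exactly the right 2-SAT residue) and does not implicitly re-couple the recursion. The list-coloring variant additionally requires the edge-preprocessing and the averaging argument over $L(v)$ sketched above.
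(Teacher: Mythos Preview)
Your low-degree (separator) branch has a genuine gap. You take the separator $S$ in $G_3=G[V_3]$, so $S$ disconnects the $3$-list vertices, but a $2$-list vertex $w$ may be adjacent to some $a\in A$ and some $b\in B$; the two sides are then \emph{not} independent. Your two central claims---``the recursive calls on $A$ and $B$ are independent'' and ``at every leaf the global 2-SAT is solved''---are mutually inconsistent. If a leaf sits at depth $O(\log n)$ (one side per split), the root-to-leaf path has colored only those $V_3$-vertices lying in the chosen side at each level, so there is no global $2$-SAT to solve: the uncolored $V_3$-vertices on the other sides still impose (unknown) constraints on the $2$-list vertices. If instead each leaf encodes a full coloring of $V_3$ (sequencing $A$ then $B$), the leaf count obeys the product recurrence $\ell(n_3)\le O(n_3^{2/3})+\ell(|A|)+\ell(|B|)$ with $|A|+|B|=n_3-|S|$, whose solution is $\ell(n_3)=\Theta(n_3)$, i.e.\ $2^{\Theta(n)}$ leaves. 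One also cannot ``compress'' each side into a residual $2$-SAT over the $2$-list variables and conjoin: already a single $3$-list vertex adjacent to three $2$-list vertices with lists $\{1,2\},\{1,3\},\{2,3\}$ yields a set of compatible $2$-list assignments that is not closed under coordinatewise majority, hence not expressible in $2$-CNF. So there is no sound way to combine the $A$- and $B$-answers while keeping the claimed running time.

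The paper sidesteps all of this by never isolating $V_3$. The degree threshold is taken in $G$ itself; in the sparse case the separator from Theorem~\ref{thm-stringsep} is applied to $G$, so the two pieces are truly independent and no background $2$-SAT is needed. In the dense case, instead of a $3$-way branch on a $3$-list vertex with many $V_3$-neighbors, the paper picks \emph{any} vertex $v$ of degree $\ge n^{1/3}$, pigeonholes over the four possible neighbor list-types to find a color $c\in L(v)$ lying in at least $n^{1/3}/4$ neighbors' lists, and does a $2$-way branch (``$v$ gets $c$'' vs.\ ``remove $c$ from $L(v)$'') measured by the potential $N=\sum_v|L(v)|$, giving $F(N)\le F(N-\Omega(N^{1/3}))+F(N-1)$.
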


\begin{proof}
  Consider an instance $(G,L)$ of \LColoring{3} with $n$ vertices (in \LColoring{k} each vertex $v$ is equipped with a list $L(v) \subseteq [k]$ and we want to find a proper coloring, in which every vertex receives a color from its list).
  Observe that without loss of generality we can assume that each list has two or three elements.
  Indeed, if there is a vertex with just one allowed color, we can fix this color and remove it from the list of each of its neighbors.
  Let $N$ be the sum of the lengths of the lists; clearly $2n \leq N \leq 3n$.

First, assume that $G$ has no vertex with degree larger than $n^{1/3}$, then the number $m$ of edges is $O(n^{4/3})$.
By Theorem \ref{thm-stringsep}, $G$ has a balanced separator of size $O(\sqrt{m}) = O(n^{2/3})$.
We can find this separator in polynomial time, if the representation is given, or by exhaustive guessing in time $n^{O(n^{2/3})}=2^{O(n^{2/3} \log n)}$, without using a representation.
Then we list all possible colorings of the separator and proceed with a standard divide-and-conquer approach. The depth of the recursion is $O(\log n)$, so the total time complexity of this step is $2^{O(n^{2/3} \log^2 n)}$, or $2^{O(n^{2/3} \log n)}$ if we use the geometric representation to find a separator.

If there is a vertex $v$ of degree at least $n^{1/3}$, then one among the lists: $\{1,2\}, \{1,3\}, \{2,3\},$ $\{1,2,3\}$ appears on at least $n^{1/3}/4$ of its neighbors.
Thus there are two colors (say, $1$ and $2$) that appear in lists of at least $n^{1/3}/4$ of neighbors of $n$.
Since the list of $v$ has size at least two, one of these colors (say $1$) appears on the list of $v$.
We branch into two possibilities: choosing the color 1 for $v$ (then we exclude $1$ from the lists of all neighbor of $v$), and not choosing 1 for $v$ (then we remove 1 from the list of $v$).
The complexity $F$ of this step is given by the recursion $F(N) \leq  F(N - n^{1/3}/4) + F(N-1) \leq F(N - N^{1/3}/(3^{1/3} \cdot 4)) + F(N-1)$.
This inequality is satisfied by $F(N) = 2^{O(N^{2/3} \log N)} = 2^{O(n^{2/3} \cdot \log n)}$.

Combining these two cases gives the claimed time complexity.
Finally, observe that if the input graph is not a string graph, then the exhaustive search for a separator might fail, and then we can report a wrong input instance.
\end{proof}

For \fvs, there is no obvious subexponential branching on a high-degree vertex.
Instead, we use the following theorem by Lee.
\begin{theorem}[Lee \cite{Lee16}] \label{thm-string-Ktt}
 There is a constant $c$ such that for any $t \geqslant 1$, $K_{t,t}$-free string graphs on $n$ vertices have fewer than $c \cdot t\log t \cdot n$ edges. 
\end{theorem}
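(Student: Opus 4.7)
The plan is to prove the bound by strong induction on $n$, using Lee's own separator theorem (Theorem~\ref{thm-stringsep}) as the central ingredient. The overall structure mirrors how Euler-type linear edge bounds for planar graphs can be deduced from the Lipton--Tarjan separator theorem: since the separator size scales as $O(\sqrt{m})$ with an absolute constant, any hereditary condition restricting the density of small string graphs (here, $K_{t,t}$-freeness enforced at the base case) propagates up through the induction to a linear edge bound.

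The inductive claim is that $e(G) \leq c \cdot t n \log t$ for every $K_{t,t}$-free string graph $G$ on $n$ vertices, where $c$ is an absolute constant to be fixed at the end. For $n \leq n_0 := \Theta(t \log t)$, the trivial bound $e(G) \leq \binom{n}{2}$ already gives $e(G) \leq c t n \log t$, which handles the base case (and is the only place where $K_{t,t}$-freeness is effectively used, since for $n \leq t$ the bound is automatic and for $n$ slightly larger it is absorbed into the constant).

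For the inductive step with $n > n_0$, apply Theorem~\ref{thm-stringsep} to $G$ to obtain a balanced separator $S$ of size $s \leq K \sqrt{e(G)}$ (with $K$ absolute) such that $V(G) \setminus S = A \sqcup B$, $|A|, |B| \leq 2n/3$, and there are no edges between $A$ and $B$. Each of the induced subgraphs $G[A \cup S]$ and $G[B \cup S]$ is $K_{t,t}$-free (being a subgraph of $G$) and remains a string graph, with strictly fewer than $n$ vertices. Every edge of $G$ lies in one of these two subgraphs, so applying the inductive hypothesis yields
\begin{equation*}
e(G) \leq c \cdot t \log t \cdot (|A| + s) + c \cdot t \log t \cdot (|B| + s) \leq c \cdot t \log t \cdot n + c K \cdot t \log t \cdot \sqrt{e(G)}.
\end{equation*}
This is a quadratic inequality in $\sqrt{e(G)}$, whose solution gives $e(G) \leq c \cdot t n \log t + O(t^2 \log^2 t)$. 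For $n \geq n_0$ the correction $O(t^2 \log^2 t)$ is dominated by the main term, which closes the induction once $c$ is chosen large enough.

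The main obstacle is the careful calibration of the constants $c$ and $n_0$ so that the induction closes cleanly. A naive single-term hypothesis $e(G) \leq c t n \log t$ fails in its own recursive step by a constant factor, since the separator contributes an additive correction of order $t^2 \log^2 t$ of the wrong scaling. This can be resolved either by strengthening the inductive hypothesis to $e(G) \leq c_1 t n \log t + c_2 t^2 \log^2 t$ and tuning $c_1, c_2$, or by combining the inductive estimate in the regime $n > n_0$ with the trivial bound in the base regime $n \leq n_0$, as outlined above. A minor additional check is that the separator can be chosen with $|A|, |B| \geq 1$, so that both recursive subgraphs are strictly smaller than $G$; this is standard and follows from the balancedness guaranteed by Theorem~\ref{thm-stringsep}.
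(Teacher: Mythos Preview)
This theorem is quoted from Lee and not proved in the paper, so there is nothing here to compare against; the question is whether your argument stands on its own. It does not.

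The central algebraic claim is false: from $e(G) \leq c\,t n\log t + cK\,t\log t\,\sqrt{e(G)}$ you assert $e(G) \leq c\,t n\log t + O(t^2\log^2 t)$, but solving the quadratic actually gives a correction of order $(t\log t)^{3/2}\sqrt{n}$, which grows with $n$ and is the dominant error once $n \geq t\log t$. An $n$-dependent surplus cannot be absorbed by enlarging $n_0$, and your two-term fix $e \leq c_1 tn\log t + c_2 t^2\log^2 t$ does not help either: substituting it back into the recursion yields $c_1 t\log t\,(n+s) + 2c_2(t\log t)^2$, already larger than the hypothesis regardless of $s$. There is also a conceptual red flag: nothing in your argument genuinely uses $K_{t,t}$-freeness. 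The base case $\binom{n}{2} \leq c\,tn\log t$ for $n \leq n_0$ holds for \emph{every} graph, and the inductive step only passes the property to induced subgraphs. Run verbatim on the class of \emph{all} string graphs, the identical reasoning would ``prove'' that every string graph has linearly many edges --- contradicted by $K_n$. So the inductive step must break, and it breaks exactly where identified: the separator adds $s = O(\sqrt{m})$ vertices to each recursive piece, and $s$ is small only when $m$ is already known to be small. (Your planar analogy is also inverted: the linear edge bound for planar graphs comes from Euler's formula and is an \emph{input} to separator arguments, not a consequence of them.)
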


It is worth mentioning that Fox and Pach \cite[Theorem 5]{FoxPachArxiv} obtained a slightly weaker result  with $\log^{O(1)}t$ instead $\log t$.

\begin{theorem} \label{thm-fvs-string}
\fvs on string graphs with $n$ vertices can be solved in time
\begin{compactitem}
\item  $2^{O(n^{2/3} \log^{3/2} n)}$, if the geometric representation is given,
\item $2^{O(n^{2/3} \log^{5/2} n)}$, even without geometric representation.
\end{compactitem} 
\end{theorem}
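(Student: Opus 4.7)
The plan is to adapt the win-win argument of Theorem~\ref{thm-3col-string} to \fvs. Two adjustments are required: since \fvs admits no useful branching on a single high-degree vertex, the branching regime will exploit a large biclique, guaranteed by the contrapositive of Theorem~\ref{thm-string-Ktt}; and since \fvs does not decompose additively across a separator (cycles can alternate between the two sides through surviving separator vertices), the sparse regime will route through a tree decomposition rather than enumerate directly on each separator.

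Concretely, we fix $t = \Theta(n^{1/3})$ and search $G$ for a $K_{t,t}$ subgraph by brute force in $n^{O(t)} = 2^{O(n^{1/3}\log n)}$ time. If one is found with sides $A, B$, any feedback vertex set $F$ must satisfy $|A \setminus F| \leq 1$ or $|B \setminus F| \leq 1$, for otherwise two survivors on each side would complete a $C_4$ in $G - F$. We branch on the $2(t+1)$ choices of hit side and surviving vertex, committing at least $t-1$ vertices to $F$ in every branch; the recurrence $T(n) \leq 2(t+1)\,T(n-t+1)$ yields a total branching cost of $2^{O((n/t)\log t)} = 2^{O(n^{2/3}\log n)}$.

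Otherwise $G$ is $K_{t,t}$-free, so Theorem~\ref{thm-string-Ktt} caps $|E(G)|$ at $O(tn\log t)$, and Theorem~\ref{thm-stringsep} then yields a balanced separator of size $s = O(\sqrt{tn\log t}) = O(n^{2/3}\log^{1/2} n)$. We apply this separation recursively on both halves and collect the separators along each root-to-leaf path into bags of a tree decomposition; since successive separator sizes decrease geometrically (ratio $(2/3)^{2/3} < 1$), the resulting width is still $O(n^{2/3}\log^{1/2} n)$. Feeding this decomposition into the standard $2^{O(w\log w)} n^{O(1)}$ treewidth DP for \fvs produces the with-representation bound $2^{O(n^{2/3}\log^{3/2} n)}$. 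Without a representation, each separator must be guessed exhaustively in $n^{O(s)}$ time; propagating this factor through the $O(\log n)$ levels of the decomposition yields the claimed $2^{O(n^{2/3}\log^{5/2} n)}$. Exactly as in Theorem~\ref{thm-3col-string}, a graph that is not a string graph will simply fail the exhaustive separator search and can be reported as invalid.

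The main obstacle is the non-additivity of \fvs across a separator: a cycle in $G - F$ may cross $S$ twice, so enumerating only $F \cap S$ and recursing on the two sides, as in the coloring argument, would leave such cross-cycles unhandled. Routing through treewidth sidesteps this issue cleanly, at the cost of only the $\log w$ factor inside the DP. The remaining verifications are routine: that $t = \Theta(n^{1/3})$ balances the branching cost against the treewidth bound, and that the bag sizes aggregate to $O(s)$ rather than $O(s\log n)$.
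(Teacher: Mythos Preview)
Your proof is correct and follows the same win-win as the paper: branch on a large biclique when one exists, and exploit the small balanced separator otherwise. The paper is terser in the sparse regime---it simply says ``a divide-and-conquer approach yields\ldots''---whereas you route explicitly through a tree decomposition and the $2^{O(w\log w)}$ treewidth DP for \fvs, after correctly flagging that \fvs does not split additively across a separator the way \LColoring{3} does. This is a clarification rather than a different method: any honest divide-and-conquer for \fvs must track the forest structure among surviving separator vertices, which is exactly what the treewidth DP encodes, so your formulation is the clean way to make the paper's sentence rigorous. Two minor remarks: your biclique-search bound $n^{O(t)}=2^{O(n^{1/3}\log n)}$ is tighter than the paper's stated $2^{O(n^{2/3}\log n)}$ (either fits comfortably inside the final budget), and the geometric ratio between successive separator sizes is $(2/3)^{1/2}$ if $t$ is fixed once and for all (it becomes $(2/3)^{2/3}$ only if you redo the biclique test at every level); both are below~$1$, so the conclusion is unaffected.
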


\begin{proof}
The proof is similar to the proof of Theorem \ref{thm-3col-string}. Let $G$ be the input graph.
If $G$ has fewer than $c/3 \cdot  n^{4/3} \log n$ edges (where $c$ is a constant from Theorem \ref{thm-string-Ktt}), then by Theorem \ref{thm-stringsep} there is a balanced separator of size $O(\sqrt m) = O(n^{2/3} \log^{1/2} n)$, and a divide-and-conquer approach yields a running time $2^{O(n^{2/3} \log^{3/2} n)}$ (with the representation, we use it to find the separator), or $2^{O(n^{2/3} \log^{5/2} n)}$ (without the representation, we guess the separator exhaustively).

Otherwise, by Theorem \ref{thm-string-Ktt}, there is a subgraph of $G$ isomorphic to the biclique $K_{n^{1/3},n^{1/3}}$. We can find it by exhaustive guessing in time $n^{n^{2/3}} \cdot poly(n) = 2^{O(n^{2/3} \log n)}$.
Observe that any feedback vertex set of $G$ must contain all but one vertex of one bipartition class of the biclique.
Guessing which vertex is {\emph not} chosen into the solution gives us a branching algorithm, whose complexity is given by the recursion $F(n) \leq 2^{O(n^{2/3} \log n)} + 2n^{1/3}F(n-n^{1/3}+1)$, which is solved by $F(n) = 2^{O(n^{2/3} \log n)}$.
If the exhaustive search for a separator or a biclique fails, then we can correctly report that the input graph is not a string graph.
\end{proof}

\section{Lower bounds} \label{sec:lower}
Rather surprisingly, the win-win for \Coloring{3} abruptly ceases to work for \kColoring for every $k \geq 4$.
First, let us consider the \LColoring{4}.
Following Kratochv\'il and Matou\v{s}ek \cite{KRATOCHVIL1994289}, by \pure2dir we denote graphs admitting a \2dir representation in which parallel segments do not intersect. Observe that such a graph is bipartite.

\begin{theorem} \label{thm:2dir-lists}
\LColoring{4} of a \pure2dir graph cannot be solved in time $2^{o(n)}$, unless the ETH fails.
\end{theorem}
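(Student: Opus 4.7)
The plan is to give a linear-size reduction from \3sat to \LColoring{4} on \pure2dir. Combined with the sparsification lemma of Impagliazzo, Paturi, and Zane, which rules out $2^{o(N+M)}$ algorithms for \3sat under the ETH (where $N$ and $M$ are the number of variables and clauses), a reduction producing an instance with $n = O(N+M)$ segments will yield the claimed $2^{o(n)}$ lower bound. Since \pure2dir graphs are bipartite, I would reserve the colours $\{1,2\}$ for truth values (say $1=\text{TRUE}$, $2=\text{FALSE}$) and use $\{3,4\}$ only as ``service colours'' inside auxiliary subgadgets.

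For each variable $x_i$ I would build a horizontal \emph{wire}: a row of short horizontal segments, each with list $\{1,2\}$, in which consecutive pieces are tied together by a vertical \emph{equality link} also of list $\{1,2\}$. A two-line case analysis shows that proper colouring of such a path of length two with all three lists equal to $\{1,2\}$ forces the two horizontal endpoints to agree (both must differ from the middle vertex, and only one colour remains). Thus the truth value of $x_i$ propagates along the whole wire. Each literal occurrence of $x_i$ in a clause is implemented by a horizontal \emph{port} spliced out of the wire: for a positive literal the port inherits the wire's colour, while for a negative literal I would insert a small swap gadget built from verticals of lists $\{1,3\}$ and $\{2,4\}$ interfacing with an intermediate horizontal of list $\{3,4\}$, tuned so that its only proper colourings flip the value between the wire and the port.

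The core of the construction is the clause gadget. For each clause $\ell_1 \vee \ell_2 \vee \ell_3$ I would attach to its three literal ports $L_1,L_2,L_3$ (all with list $\{1,2\}$, where colour $1$ means satisfied) a constant-size bipartite arrangement of horizontal and vertical segments whose auxiliary lists are drawn from $\{1,2,3,4\}$. The lists are chosen so that the subgadget is properly list-colourable if and only if some $L_j=1$: the idea is that service colours $3$ and $4$ are carried by auxiliary vertices which can only coexist as long as at least one port has ``escaped'' the FALSE colour $2$, whereas the all-FALSE configuration forces a pair of auxiliary segments into colours $3$ and $4$ that then block a final horizontal of list $\{3,4\}$ from being coloured. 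Everything is drawn on an orthogonal $O(N+M) \times O(N+M)$ grid, with variable wires occupying horizontal bands and clause gadgets vertical bands, so every crossing is between exactly one horizontal and one vertical segment and parallel segments never share a point; each variable contributes a number of segments proportional to its number of occurrences and each clause a constant number, giving the desired $n=O(N+M)$.

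The main technical obstacle is the design and correctness of the clause gadget (and, to a lesser extent, the negation gadget). Because \pure2dir graphs are bipartite, the three literal ports are mutually non-adjacent and the disjunctive constraint cannot be imposed at a single vertex; it has to be spread over a bounded number of horizontal and vertical auxiliary segments with list assignments carefully tuned in the $4$-colour alphabet. The correctness argument then boils down to a finite but non-trivial case analysis verifying that exactly the all-FALSE pattern on $(L_1,L_2,L_3)$ is excluded while every other pattern still admits a proper colouring of the gadget, and that the constant-size layout of the gadget can be placed into the grid without creating spurious crossings with neighbouring wires or clauses.
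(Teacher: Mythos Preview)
Your high-level plan is sound, but the proposal skips the one geometric issue that actually carries the weight of the argument: unwanted crossings. In a \pure2dir representation you cannot route a vertical segment from a clause gadget down to the literal port of variable $x_i$ without crossing the horizontal wire pieces of every variable whose band lies in between. With your list choices these crossings are fatal: you give all wire pieces (and your vertical equality links) the list $\{1,2\}$, so any vertical belonging to a clause gadget that crosses an unrelated wire piece either forces that variable's colour or is itself forced. Your sentence ``every crossing is between exactly one horizontal and one vertical segment'' only guarantees bipartiteness; it does not prevent a long vertical from hitting $\Theta(N)$ horizontal wires. Avoiding this by routing would amount to requiring a planar incidence structure, which you do not have for general \3sat, and any crossover gadgetry would blow up the instance beyond linear size.

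The paper resolves exactly this point, and very cheaply: it uses a \emph{single} long horizontal segment per variable with list $\{1,2\}$ and three long vertical segments per clause with list $\{3,4\}$, laid out in a full grid. Because the two list families are disjoint, the $\Theta(NM)$ crossings are all colour-free, and the reduction stays linear. Only at the $3M$ relevant crossings does one drop in a tiny (three-segment) equality or inequality gadget that translates between the $\{1,2\}$ world and the $\{3,4\}$ world; a four-segment satisfiability gadget sits atop each triple $y^j_1,y^j_2,y^j_3$. In particular there is no need for your chain of short pieces and equality links, nor for a separate negation gadget and a not-yet-specified clause gadget: once the disjoint-list trick is in place, all the remaining gadgets are explicit and constant size. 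Your outline becomes correct if you adopt this separation of colours by direction; without it, the layout step is a genuine gap.
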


\begin{proof}
Let $\Phi$ be a \3sat formula with $n$ variables $v_1,v_2,\ldots,v_n$ and $m$ clauses $C_1,C_2,\ldots,C_m$.
By repeating some literals in a clause, we may assume that each clause contains exactly three literals.
For a clause $C_i$, let $v^i_1,v^i_2,v^i_3$ denote the variables of $C_i$.

We construct a \2dir graph $G$ with lists $L$ of colors from the set $\{1,2,3,4\}$, such that $\Phi$ is satisfiable if and only if $G$ is list-colorable with respect to the lists $L$.

For each variable $v_i$, we introduce a horizontal segment called $x_i$. For each clause $C_i$ we introduce three vertical segments $y^i_1,y^i_2,y^i_3$, corresponding to $v^i_1,v^i_2$, and $v^i_3$, respectively.
We arrange them in a grid-like way (see Figure \ref{fig-2dir}). 
One may observe that the intersection graph induced by those segments is a biclique.
We set the lists of each $x_i$ to $\{1,2\}$ and the lists of each $y^i_1,y^i_2,y^i_3$ to $\{3,4\}$.
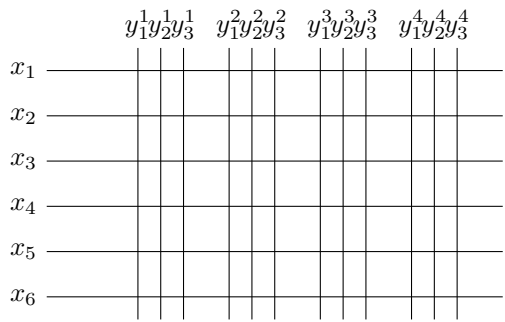
\begin{figure}[h]
\centering
\begin{tikzpicture}[scale=0.6]
\newcommand*\nn{6}
\newcommand*\mm{4}

\foreach \i in {1,...,\nn}
{	
	\draw (2*\mm+2,-\i)--(0,-\i) node [left] {$x_{\i}$};	
}
\foreach \i in {1,...,\mm}
{	
	\draw (2*\i,-\nn-0.5)--(2*\i,-0.5) node [above] {$y^{\i}_1$};	
	\draw (2*\i+0.5,-\nn-0.5)--(2*\i+0.5,-0.5) node [above] {$y^{\i}_2$};	
	\draw (2*\i+1,-\nn-0.5)--(2*\i+1,-0.5) node [above] {$y^{\i}_3$};	
}

\end{tikzpicture}
\caption{The arrangement of variable- and occurrence-segments in $G$.}
\label{fig-2dir}
\end{figure}

The colors 1 and 2 used for coloring $x_i$ will be interpreted, respectively, as {\em true} and {\em false} values given to $v_i$, while the colors 3 and 4 given to $y^i_j$ will be interpreted, respectively, as {\em true} and {\em false} values given to the literal corresponding to $v^i_j$.

To ensure this, we need to introduce {\em equality gadgets} and {\em inequality gadgets}.
If the variable $v_i$ appears positively in the clause $C_j$ as its $\ell$-th literal, then at the crossing point of $x_i$ and $y^j_\ell$ we put the equality gadget ensuring that in any feasible coloring of $G$, the color of $x_i$ is 1 (2, respectively) if and only if the color of $y^i_\ell$ is 3 (4, respectively).
On the other hand, if $v_i$ appears negatively in $C_j$ as its $\ell$-th literal, then at the crossing point of $x_i$ and $y^j_\ell$ we put the inequality gadget ensuring that in any feasible coloring of $G$, the color of $x_i$ is 1 (2, respectively) if and only if the color of $y^j_\ell$ is 4 (3, respectively).

The equality gadget consists of 3 segments, arranged as depicted on Figure \ref{fig-in-equality}. Consider the equality gadget (left lists on Figure \ref{fig-in-equality}) and suppose $x_i$ gets the color 1. Then $a$ receives color 3, and $c$ gets the color 4. Thus the only choice for the color for $y^j_\ell$ is 3. The coloring can be extended by coloring $b$ to 2. The other cases are symmetric.
The inequality gadget is analogous and uses the right lists on Figure \ref{fig-in-equality}.  

\begin{figure}[h]
\begin{minipage}{.5\linewidth}%
\begin{tikzpicture}[scale=0.6]
\def\s{0.75}

\coordinate (xb) at (3,0) ;
\coordinate (xe) at (-5,0) ;
\coordinate (yb) at (0,-0.75) ;
\coordinate (ye) at (0,3.5) ;

\draw (xb) -- (xe) node [above] {$x_i$};
\draw (yb) -- (ye) node [left] {$y^j_\ell$};

\draw[dashed] (xe) --++(-\s,0) ;
\draw[dashed] (yb) --++(0,-\s) ;
\draw[dashed] (xb) --++(\s,0) ;
\draw[dashed] (ye) --++(0,\s) ;

\draw (-3,3) -- (-3,-1) node [below] {$a$};
\draw (-2.5,3) -- (-2.5,-1) node [below] {$b$};
\draw (-3.5,2.6) -- (0.5,2.6) node [right] {$c$};
\end{tikzpicture} 
\end{minipage}%
\begin{minipage}{.3\linewidth}
{\small
\begin{tabular}{l | l}
vertex & list \\ \hline
$x_i$ & 1,2 \\
$y^j_\ell$ & 3,4 \\ \hline
$a$ & 1,3 \\
$b$ & 2,4 \\
$c$ & 3,4 \\
\end{tabular}

\medskip
Lists in the\\equality gadget.
}
\end{minipage}%
\begin{minipage}{.3\linewidth}
{\small
\begin{tabular}{l | l}
vertex & list \\ \hline
$x_i$ & 1,2 \\
$y^j_\ell$ & 3,4 \\ \hline
$a$ & 1,4 \\
$b$ & 2,3 \\
$c$ & 3,4 \\
\end{tabular}

\medskip
Lists in the\\inequality gadget.
}
\end{minipage}
\caption{Equality and inequality gadgets. The arrangement of segments is the same in both gadget, the only difference is the lists.}
\label{fig-in-equality}
\end{figure}
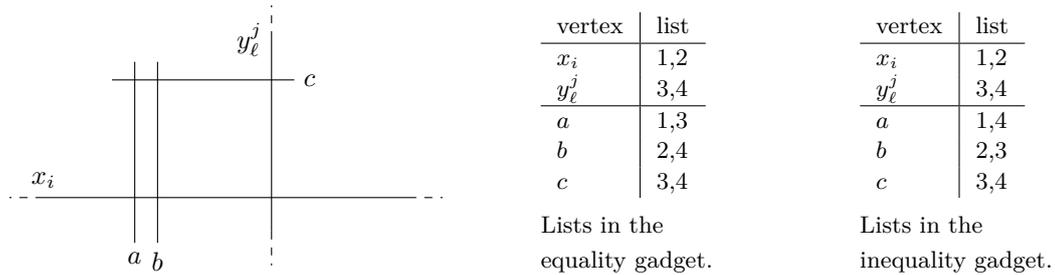

The only thing left is to ensure that the coloring of $y^j_1,y^j_2,y^j_3$ exists if and only if $C_j$ is satisfied. This is ensured by the {\em satisfiability gadget} depicted in Figure \ref{fig-sat}, attached to the top ends of $y_j^1,y^j_2$, and $y^j_3$. Note that the gadget can be colored if and only if one of  $y^j_1,y^j_2,y^j_3$ gets color 3, which is equivalent to one literal of $C_j$ being set to {\em true}.

\begin{figure}[h!]
\begin{minipage}{.7\linewidth}%
\begin{tikzpicture}[scale=0.6]

\draw (0,3) -- (0,1) node [below] {$y^j_1$};
\draw (2,3.4) -- (2,1) node [below] {$y^j_2$};
\draw (4,3.8) -- (4,1) node [below] {$y^j_3$};

\draw (0.5,2.8) -- (-2,2.8) node [left] {$a$};
\draw (2.5,3.2) -- (-2,3.2) node [left] {$b$};
\draw (4.5,3.6) -- (-2,3.6) node [left] {$c$};

\draw (-1,4) -- (-1,2) node [below] {$d$};

\end{tikzpicture} 
\end{minipage}%
\begin{minipage}{.3\linewidth}
{\small
\begin{tabular}{l | l}
vertex & list \\ \hline
$y^j_\ell$ & 3,4 \\ \hline
$a$ & 1,4 \\
$b$ & 2,4 \\
$c$ & 3,4 \\
$d$ & 1,2,3
\end{tabular}
}
\end{minipage}
\caption{Satisfiability gadget.}
\label{fig-sat}
\end{figure}

The number of vertices of $G$ is 
$
n' = \underbrace{n}_{x_i} + \underbrace{3m}_{y^j_\ell} + \underbrace{9m}_{\text{(in)equality}} + \underbrace{4m}_{\text{satisfiability}} = \Theta(n+m).
$
On the other hand, a algorithm solving list coloring of $G$ in time $2^{o(n')}$ can be used to decide the satisfiability of $\Phi$ in time $2^{o(n')} = 2^{o(n+m)}$, which in turn contradicts the~ETH.
\end{proof}

The non-list version is obtained analogously to the hardness for \Coloring{6} in \cite{Biro17}.
 We include the proof to make the paper self-contained.

\begin{theorem} \label{thm:2dir}
For every fixed $k \geq 4$, the \kColoring problem of a \2dir graph cannot be solved in time $2^{o(n)}$, unless the ETH fails.
\end{theorem}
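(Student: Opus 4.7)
The approach is to start from the \pure2dir \LColoring{4} instance $(G,L)$ produced by Theorem~\ref{thm:2dir-lists} and encode the list constraints by edges to a \emph{palette} of $k$ color-anchored segments, following the strategy used for \Coloring{6} in~\cite{Biro17}. First I build a palette $p_1, \ldots, p_k$ that forms a clique $K_k$. Crucially, since \2dir (unlike \pure2dir) allows collinear parallel segments to intersect, such a clique can be realized: take $\lceil k/2 \rceil$ pairwise overlapping horizontal segments on one common horizontal line together with $\lfloor k/2 \rfloor$ pairwise overlapping vertical segments on one common vertical line, where the two supporting lines cross inside both overlapping regions. In any proper $k$-coloring the palette uses all $k$ colors, and by renaming we may assume $p_i$ receives color $i$.

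Next I encode the list constraints. For each vertex $v$ of $G$ and each color $c \in [k] \setminus L(v)$, the goal is to forbid color $c$ on $v$. The clean way is to attach to $v$ a private \emph{color-$c$ enforcer} segment $e_{v,c}$, adjacent to $v$ and to the palette members $\{p_j : j \neq c\}$, so that $e_{v,c}$ is forced to color $c$ in any proper $k$-coloring. The enforcer is laid out in a thin strip around an endpoint of $v$; auxiliary collinear fragments of the palette are routed into the strip so that $e_{v,c}$ intersects only $v$ and exactly the $k-1$ intended palette members. For $k \geq 5$ the same mechanism also forbids each color in $\{5, \ldots, k\}$ on every vertex of $G$.

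The main obstacle is geometric: realizing the forbidding adjacencies without creating spurious intersections with unrelated segments of the grid of Figure~\ref{fig-2dir}. I handle this by localizing each enforcer gadget in a small neighborhood of its target vertex, exploiting the constant local complexity of the original construction at each crossing, and by extending the palette with further collinear copies along the horizontal and vertical rails of the grid wherever a forbidding edge needs to be delivered. Each vertex $v$ of $G$ acquires at most $k - |L(v)| \leq k - 1$ such enforcers, each of size bounded by a constant depending only on the fixed $k$.

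Since the graph $G$ from Theorem~\ref{thm:2dir-lists} has $\Theta(N+M)$ vertices and the per-vertex blow-up is $O(k) = O(1)$, the resulting \2dir graph $G'$ has $n' = \Theta(N+M)$ vertices, and $G'$ is properly $k$-colorable iff $(G,L)$ admits a list $4$-coloring iff the input \3sat formula is satisfiable. A $2^{o(n')}$-time algorithm for \kColoring on \2dir would then yield a $2^{o(N+M)}$-time algorithm for \3sat, contradicting the ETH via the sparsification lemma of Impagliazzo et al.~\cite{Sparsification}.
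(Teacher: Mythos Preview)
Your overall strategy matches the paper's exactly: start from the \LColoring{4} instance of Theorem~\ref{thm:2dir-lists}, build a $K_k$ palette of overlapping reference segments, and simulate each list $L(v)$ by making $v$ adjacent to segments forced to carry the forbidden colors. The linear blow-up and the ETH conclusion are also the same.

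Where your write-up is thin is precisely where the paper spends its effort. You say you will ``extend the palette with further collinear copies along the horizontal and vertical rails'' and ``route auxiliary collinear fragments of the palette into the strip'', but you never say how a fresh segment placed near some gadget is \emph{guaranteed} to receive a prescribed color~$c$. A collinear copy disconnected from the original palette segment is just a new vertex; nothing forces its color. The paper resolves this with an explicit \emph{turning gadget} (Figure~\ref{fig-turn}): a constant-size configuration that takes $k$ overlapping segments carrying the reference colors and bends or splits them, so that the outgoing segments provably carry the same colors (up to permutation). These turning gadgets are then threaded along the rows and columns of the grid (Figure~\ref{fig:2dir-transport}) and into each (in)equality and satisfiability gadget (Figure~\ref{fig-lists}), delivering the needed reference colors locally while introducing only $O(n)$ constant-size gadgets in total. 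Your sketch implicitly assumes such a mechanism exists; the paper actually builds it, and that construction is the main new content of this proof over Theorem~\ref{thm:2dir-lists}.
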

\begin{proof}
We modify the construction from the proof of Theorem \ref{thm:2dir-lists}.
We first introduce $k$ overlapping segments $R_1,R_2,\ldots,R_k$, whose coloring will serve as a reference coloring. 
Since these segments are pairwise intersecting, each of them receives a different color. 
We will denote by $i \in [k]$ the color assigned to $R_i$.

Now, for each segment $v$ of $G$, we want to simulate the list $L(v)$ from the instance of \LColoring{4} constructed in the proof of Theorem \ref{thm:2dir-lists}.
For every color $i \notin L(v)$, we want to introduce a segment $s_i$ intersecting $v$, which will always receive color $i$.

To achieve this, we first need to transport the reference coloring to every gadget. We split it into two parts -- we will separately transport colors 1 and 2, and colors greater than 2. 
The overall high-level idea is depicted in Figure \ref{fig:2dir-transport}. 
Observe that this already simulates the lists for every $x_i$.

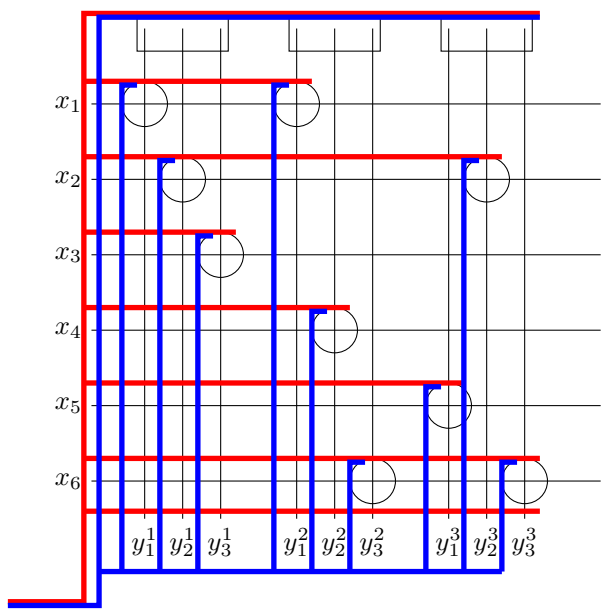
\begin{figure}[h!]
\begin{center}
\begin{tikzpicture}[scale=1]
\newcommand*\nn{6}
\newcommand*\mm{3}

\foreach \i in {1,...,\nn}
{	
	\draw (2*\mm+1,\nn-\i+1)--++(-2*\mm-0.7,0) node [left] {$x_{\i}$};	
}

\foreach \i in {1,...,\mm}
{	
	\draw (2*\i-1,\nn+1)--++(0,-\nn-0.5) node [below] {$y^{\i}_1$};	
	\draw (2*\i-0.5,\nn+1)--++(0,-\nn-0.5) node [below] {$y^{\i}_2$};	
	\draw (2*\i,\nn+1)--++(0,-\nn-0.5) node [below] {$y^{\i}_3$};		
	\draw (2*\i-1-0.1,\nn+0.7) rectangle ++(1.2,0.5);	
}

\draw (1, 6) circle (0.3);	
\draw (1.5, 5) circle (0.3);	
\draw (2, 4) circle (0.3);	

\draw (3, 6) circle (0.3);	
\draw (3.5, 3) circle (0.3);	
\draw (4, 1) circle (0.3);	

\draw (5, 2) circle (0.3);	
\draw (5.5, 5) circle (0.3);	
\draw (6, 1) circle (0.3);	

\draw[line width = 2, color = red] (6.2,7.2) --++(-6,0)--++(0,-7.8)--++(-1,0);
\draw[line width = 2, color = red] (0.2,6.3) --++(3,0);
\draw[line width = 2, color = red] (0.2,5.3) --++(5.5,0);
\draw[line width = 2, color = red] (0.2,4.3) --++(2,0);
\draw[line width = 2, color = red] (0.2,3.3) --++(3.5,0);
\draw[line width = 2, color = red] (0.2,2.3) --++(5,0);
\draw[line width = 2, color = red] (0.2,1.3) --++(6,0);
\draw[line width = 2, color = red] (0.2,0.6) --++(6,0);

\draw[line width = 2, color = blue] (6.2,7.15) --++(-5.8,0)--++(0,-7.8)--++(-1.2,0);
\draw[line width = 2, color = blue] (0.4,-0.2) --++(5.3,0);

\draw[line width = 2, color = blue] (0.7,-0.2) --++(0,6.45)--++(0.2,0);
\draw[line width = 2, color = blue] (1.2,-0.2) --++(0,5.45)--++(0.2,0);
\draw[line width = 2, color = blue] (1.7,-0.2) --++(0,4.45)--++(0.2,0);

\draw[line width = 2, color = blue] (2.7,-0.2) --++(0,6.45)--++(0.2,0);
\draw[line width = 2, color = blue] (3.2,-0.2) --++(0,3.45)--++(0.2,0);
\draw[line width = 2, color = blue] (3.7,-0.2) --++(0,1.45)--++(0.2,0);

\draw[line width = 2, color = blue] (5.2,-0.2) --++(0,5.45)--++(0.2,0);
\draw[line width = 2, color = blue] (4.7,-0.2) --++(0,2.45)--++(0.2,0);
\draw[line width = 2, color = blue] (5.7,-0.2) --++(0,1.45)--++(0.2,0);

\end{tikzpicture}
\end{center}
\caption{Reference coloring is transported to every gadget. Circles denote the (in)equality gadgets, while rectangles denote the satisfiability gadgets.
Red and blue lines denote, respectively, pairs of overlapping segments with colors 1,2, and colors greater than 2. Segments $R_1,R_2,\ldots,R_k$ are positioned in the lower left corner of the picture.}
\label{fig:2dir-transport}
\end{figure}

Such a construction relies on a constant-size gadget, which allows us to turn or split the reference coloring. 
The construction of this gadget is depicted in Figure \ref{fig-turn}. 
Note that the number of segments in this gadget is constant if $k$ is constant. Moreover, turning or splitting the reference coloring of fewer than $k$ colors can be obtained by a simple adaptation of the turning gadget. 
Indeed, suppose we want to introduce a turning gadget for the set of colors $C \subseteq [k]$, with $|C|=k'<k$ and we have $k'$ overlapping segments carrying these colors. We introduce $k-k'$ dummy segments, overlapping these segments. The dummy segments will clearly receive colors from $[k] \setminus C$, but we do not know which segment will get which color.
Now we introduce a turning gadget for $k$ colors. We know which segments leaving the turning gadget get colors in $C$ (and we precisely know which segment gets which color). We do not need he remaining segments anymore, so we can finish them as soon as they leave the turning gadget.
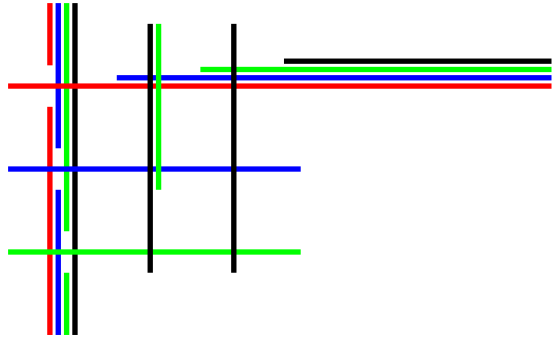
\begin{figure}[h]
\begin{center}
\begin{tikzpicture}[scale=1.1, line width=2]
\draw[color = red] (0,3) -- (0, 5.75);
\draw[color = red] (0,6.25) -- (0, 7);
\draw[color = blue] (0.1,3) -- (0.1, 4.75);
\draw[color = blue] (0.1,5.25) -- (0.1, 7);
\draw[color = green] (0.2,3) -- (0.2, 3.75);
\draw[color = green] (0.2,4.25) -- (0.2, 7);
\draw[color = black] (0.3,3) -- (0.3, 7);

\draw[color = red] (-0.5,6) -- (6,6);
\draw[color = blue] (-0.5,5) -- (3,5);
\draw[color = green] (-0.5,4) -- (3,4);

\draw[color = blue] (0.8,6.1) -- (6,6.1);
\draw[color = green] (1.8,6.2) -- (6,6.2);
\draw[color = black] (2.8,6.3) -- (6,6.3);

\draw[color = black] (1.2,3.75) -- (1.2,6.75);
\draw[color = green] (1.3,4.75) -- (1.3,6.75);

\draw[color = black] (2.2,3.75) -- (2.2,6.75);
\end{tikzpicture}
\end{center}
\caption{Turning gadget for $k=4$ colors. The parallel segments depicted close to each other are overlapping. Observe that the depicted 4-coloring is the only possible (up to the permutation of colors). For $k > 4$ the turning gadget is analogous.}
\label{fig-turn}
\end{figure}

Now, the only thing left is to connect every segment in every gadget to appropriate segments carrying the reference coloring (note that each $y^j_\ell$ belongs to some (in)equality gadget). This can easily be done using a constant number of additional segments per gadget: we introduce a turning gadget for $k$ colors, and finish the segments that are not needed anumore before they intersect the segments in gadgets (see Figure \ref{fig-lists}).

\begin{figure}[h]
\hfill
\begin{tikzpicture}[scale=0.8]
\draw[line width = 2, color = violet] (2,3) -- (5.2,3);
\draw[line width = 2, color = violet] (3.5,3) --++ (0,-1.2);
\draw[line width = 2, color = violet] (4,3) --++ (0,-0.7);
\draw[line width = 2, color = violet] (4.5,3) --++ (0,-0.7);
\draw (2,0) -- (6,0) node [above] {$x_i$};
\draw (5,-1) -- (5,4) node [left] {$y^j_\ell$};
\draw (6,2) -- (3,2) node [left] {$c$};
\draw (4,2.5)--(4,-0.5) node [below] {$a$};
\draw (4.5,2.5)--(4.5,-0.5) node [below] {$b$};
\end{tikzpicture}
\hfill
\begin{tikzpicture}[scale=0.7]
\draw[line width = 2, color = violet] (-4.5,4) -- (-0.5,4);
\draw[line width = 2, color = violet] (-3.8,4) --++ (0,-1.5);
\draw[line width = 2, color = violet] (-2.8,4) --++ (0,-1);
\draw[line width = 2, color = violet] (-1.8,4) --++ (0,-0.5);

\draw (0,3) -- (0,0) node [below] {$y^j_1$};
\draw (2,3.4) -- (2,0) node [below] {$y^j_2$};
\draw (4,3.8) -- (4,0) node [below] {$y^j_3$};

\draw (0.5,2.8) -- (-4,2.8) node [left] {$a$};
\draw (2.5,3.2) -- (-3,3.2) node [left] {$b$};
\draw (4.5,3.6) -- (-2,3.6) node [left] {$c$};

\draw (-1,4.5) -- (-1,2) node [below] {$d$};
\end{tikzpicture} 
\caption{Simulation of lists for vertices in (in)equality gadgets and satisfiability gadgets. Horizontal violet lines denote tuples of overlapping segments, carrying the reference coloring (all $k$ colors). In the vertical lines, segments carrying unnecessary colors are finished before they intersect the segments of the gadgets.}
\label{fig-lists}
\end{figure}
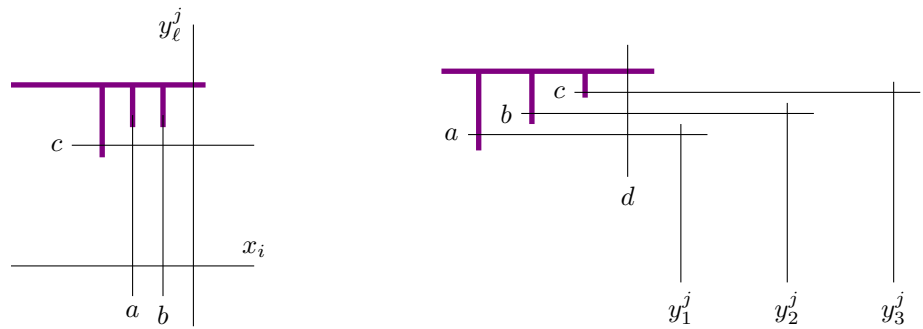

The total size of the construction increases by a constant factor, as we introduce $O(n)$ constant-size turning gadgets. 
Thus an algorithm for $k$-coloring the constructed \2dir graph in time $2^{o(n')}$ could be used to solve any \3sat instance in time $2^{o(n)}$, refuting the ETH.
\end{proof}

Observe that the construction in the proof of Theorem \ref{thm:2dir-lists} cannot be performed with segments of bounded lengths, since segments $x_i$ and $y^j_k$ need to have length $O(n)$ (while the segments inside the gadgets can have unit length).
For unit segments, we show the following weaker lower bound.

\begin{theorem} \label{thm:unit-2dir-lists}
For every $k\geq4$, \LkColoring of a unit \2dir graph or \kColoring of a unit \threedir
cannot be solved in time $2^{o(n^{2/3})}$, unless the ETH fails.
\end{theorem}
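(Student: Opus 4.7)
The plan is to refine the reduction of Theorem~\ref{thm:2dir-lists} so that all segments have unit length, at the cost of using roughly $n^{3/2}$ instead of $n$ segments. Starting from a \3sat instance $\Phi$ with $n$ variables and, via the sparsification lemma~\cite{Sparsification}, $m = O(n)$ clauses, I aim to produce a unit \2dir graph with lists (resp.\ a unit \threedir graph without lists) on $N = O(n^{3/2})$ segments whose list-$4$-colorability (resp.\ $k$-colorability) is equivalent to the satisfiability of $\Phi$. An algorithm running in time $2^{o(N^{2/3})}$ would then decide $\Phi$ in time $2^{o(n)}$, contradicting the ETH.

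First, I would embed the $m$ clause gadgets from the proof of Theorem~\ref{thm:2dir-lists} at distinct lattice points of a grid of side $s = O(\sqrt n)$; these gadgets use only constantly many segments and can be rescaled to unit length. Each variable $v_i$ is represented by a dedicated unit ``source'' segment whose color in $\{1,2\}$ encodes its truth value. The remaining task is to transport each source color to every clause-gadget slot corresponding to a literal of $v_i$. Here I invoke the Erd\H{o}s--Szekeres-type result phrased in \cite{BrandstadtK86}: for each variable $v_i$, let $P_i$ be the set of grid positions of clauses containing $v_i$, sorted by $x$-coordinate; the resulting sequence of $y$-coordinates decomposes into $O(\sqrt{|P_i|})$ monotone subsequences. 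Each monotone subsequence is realized by a single unit-segment \emph{wire}: a staircase chain using only two of the allowed directions (horizontal and vertical in the \2dir case, or two of the three in the \threedir case) that visits the corresponding clause gadgets in order. At each visit, a constant-size unit-segment equality or inequality gadget (a straightforward unit-length version of the one in the proof of Theorem~\ref{thm:2dir-lists}) attaches the wire to the appropriate slot and faithfully propagates the source color. Monotonicity is essential: it guarantees that the staircase can be realized by unit segments in a single fixed orientation without self-intersection and with no two parallel segments overlapping.

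A straightforward count gives $N = O(n^{3/2})$: each wire stays inside the grid and therefore has length $O(s) = O(\sqrt n)$, while by Cauchy--Schwarz the total number of wires is $\sum_i O(\sqrt{|P_i|}) \leq O\!\left(\sqrt{n \cdot \sum_i |P_i|}\right) = O(n)$, using $\sum_i |P_i| = 3m = O(n)$. The main obstacle, and the step that I expect to require the most care, is avoiding \emph{spurious intersections} between distinct wires and between wires and clause gadgets, since unit segments cannot simply be routed around obstacles as long segments can. The remedy is to reserve, for each of the $O(n)$ wires, its own horizontal and vertical ``track'' inside a thin strip around each grid line via tiny coordinate perturbations; the extra third direction in the \threedir setting (resp.\ the flexibility of prescribing lists in the \pure2dir setting) provides enough slack for all wires to coexist while still using strictly unit-length segments. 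Finally, for the non-list \kColoring version on unit \threedir, I would prepend a clique of $k$ pairwise overlapping reference segments together with unit turning gadgets, exactly as in the proof of Theorem~\ref{thm:2dir}, to simulate lists of size two at every wire endpoint; this blows up $N$ by a constant factor only and completes the reduction.
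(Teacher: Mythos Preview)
Your target count of $O(n^{3/2})$ unit segments via an Erd\H{o}s--Szekeres decomposition is correct, but the construction you sketch is structurally different from the paper's, and the step you yourself flag as delicate is an actual gap rather than a routine detail.

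The paper uses neither a $\sqrt n\times\sqrt n$ grid of clause gadgets nor a per-variable decomposition with Cauchy--Schwarz. It keeps everything essentially one-dimensional: all $3m$ literal segments and all $3m$ occurrence segments are vertical, each group packed into a strip of width $<\tfrac12$. A \emph{single} permutation $\sigma$ on $[3m]$ (matching literal positions to occurrence positions) is split into $z=O(\sqrt m)$ monotone pieces, and each piece is handled in its own horizontal \emph{layer} carrying fresh copies of all $6m$ vertical segments. Collinear copies in adjacent layers are glued by $k-1$ overlapping unit segments that sit in the empty vertical gap between the layers and hence touch nothing else; within a layer, monotonicity is precisely what lets the horizontal equality segments be staggered so that each gadget's two auxiliary vertical segments miss every other horizontal segment of that layer. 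The whole point of the layered one-dimensional layout is that every auxiliary segment is guaranteed to live in otherwise empty space.

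In your grid layout this guarantee disappears. Giving each of the $O(n)$ wires its own track forces track spacing $o(1)$. But every turn of a staircase needs an equality gadget whose auxiliary segments carry mixed lists such as $\{2,3\}$ or $\{1,4\}$, and every straight run of a wire needs some device (for instance $k-1$ overlapping transversal segments, as the paper uses between layers) to force two consecutive collinear unit pieces to share a colour. All of these auxiliary segments have length $1$ and lie transversally to the wire, so each one sweeps across every foreign track within distance $1$ and meets the segments of many other wires. Those intersections are \emph{not} neutralised by disjoint lists: a gadget segment with list $\{2,3\}$ meeting a foreign wire segment with list $\{1,2\}$ forbids colour $2$ on the latter whenever the gadget uses colour $2$, which couples the truth values of unrelated variables. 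Tilting the auxiliary segments into the third direction in \threedir does not help either, since they are still unit length and the foreign tracks are $o(1)$ apart. Unless you can exhibit a concrete local arrangement in which every gadget segment meets only its two intended wire segments---which is exactly what the paper's layered design buys---the reduction as proposed does not go through.
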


\begin{proof}
Consider a \3sat instance with variables $v_1,v_2,\ldots,v_n$ and clauses $C_1,C_2,\ldots,C_m$, where $m=\Theta(n)$. By duplicating some literals if necessary, we may assume that each clause contains exactly three literals.

Let us start with a reduction to \LkColoring in unit \2dir.
For each clause $C_j$ we introduce three vertical unit segments, each corresponding to one literal in $C_j$.
These segments will be called {\em literal segments}. We place them in such a way that the distance between the leftmost and the rightmost literal segment is slightly smaller than 1/2. 
The ordering of the segments is the following: first, the literal segments corresponding to the clause $C_1$, then literal segments corresponding to the clause $C_2$ and so on.
Moreover, they are slightly shifted vertically, so that the $y$-coordinates of their top endpoints form an increasing sequence. We set the list of possible colors for each literal segment to $\{3,4\}$ and for each three literal segments corresponding to a single clause, we introduce a satisfiability gadget already shown in Figure \ref{fig-sat}.
We will interpret 3 as assigning the value {\em true} to the particular literal (and 4 will correspond to {\em false}).
Figure \ref{fig-sat-placement} shows how the placement of literal segments and satisfiability gadgets.
\begin{figure}[h!]
\begin{tikzpicture}[scale=0.6]

\draw[red, line width = 1.5] (0,3) --++ (0,-2);
\draw[red, line width = 1.5] (2,3.3) --++ (0,-2.3);
\draw[red, line width = 1.5] (4,3.7) --++ (0,-2.7);
\draw[red] (0.5,2.8) --++ (-6,0);
\draw[red] (2.5,3.2) --++ (-8,0);
\draw[red] (4.5,3.6) --++ (-10,0);
\draw[red] (-1,3.7) --++ (0,-2.7);

\draw[blue, line width = 1.5] (6,4) --++ (0,-3);
\draw[blue, line width = 1.5] (8,4.3) --++ (0,-3.3);
\draw[blue, line width = 1.5] (10,4.7) --++ (0,-3.7);
\draw[blue] (6.5,3.8) --++ (-10,0);
\draw[blue] (8.5,4.2) --++ (-10,0);
\draw[blue] (10.5,4.6) --++ (-10,0);
\draw[blue] (5,4.7) --++ (0,-3.7);

\draw[violet, line width = 1.5] (12,5) --++ (0,-4);
\draw[violet, line width = 1.5] (14,5.3) --++ (0,-4.3);
\draw[violet, line width = 1.5] (16,5.7) --++ (0,-4.7);
\draw[violet] (12.5,4.8) --++ (-10,0);
\draw[violet] (14.5,5.2) --++ (-10,0);
\draw[violet] (16.5,5.6) --++ (-10,0);
\draw[violet] (11,5.7) --++ (0,-4.7);
\end{tikzpicture} 
\caption{Placement of satisfiability gadgets. Segments in one color correspond to one clause. Literal segments are depicted by thick lines.
Note that all segments can be freely extended to the left and to the bottom, to make them unit.
}
\label{fig-sat-placement}
\end{figure}
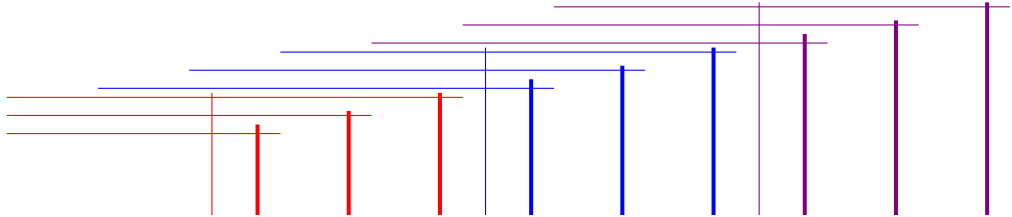
Analogously, for each variable $v$, we introduce a vertical segment for each occurrence of $v$ (we call these segments {\em occurrence segments}) with list $\{3,4\}$. The segments are placed in such a way that the distance between the leftmost and the rightmost occurrence segment is slightly smaller than 1/2. Moreover, leftmost occurence segments correspond to $v_1$, then we put the segments for $v_2$ etc.
For a variable $v_i$ for $i \in [n]$, we introduce a horizontal segment, intersecting the occurrence segments of variables $v_i,v_{i+1},\ldots,v_n$.
These segments will be called {\em variable segments}. The variable segments are pairwise disjoint and each has list $\{1,2\}$. Again, we will interpret 1 as the value {\em true} given to a variable, and 2 will denote {\em false}.
Now we need to ensure that the truth assignment defined by the coloring of occurrence segments is consistent.
For this, we will use a slightly modified version of the (in)equality segment introduced in Figure~\ref{fig-in-equality}. The modified gadget is shown in Figure~\ref{fig-in-equality-mod}.
\begin{figure}[h]
\begin{minipage}{.5\linewidth}%
\begin{tikzpicture}[scale=0.6]
\def\s{0.5}

\coordinate (yb) at (3,0) ;
\coordinate (ye) at (-5,0) ;
\coordinate (xb) at (0,-2.5) ;
\coordinate (xe) at (0,3.5) ;

\draw (yb) -- (ye) node [above] {$y$};
\draw (xb) -- (xe) node [left] {$x$};

\draw[dashed] (ye) --++(-\s,0) ;

\draw[dashed] (yb) --++(\s,0) ;
\draw[dashed] (xe) --++(0,\s) ;

\draw (-0.1,3) --++ (0,-6) node [below, left] {$a$};
\draw ( 0.1,3) --++ (0,-6) node [below, right] {$b$};
\end{tikzpicture} 
\end{minipage}%
\begin{minipage}{.3\linewidth}
{\small
\begin{tabular}{l | l}
vertex & list \\ \hline
$x$ & 3,4 \\
$y$ & 1,2 \\ \hline
$a$ & 3,2 \\
$b$ & 4,1 \\
\end{tabular}

\medskip
Lists in the\\equality gadget.
}
\end{minipage}%
\begin{minipage}{.3\linewidth}
{\small
\begin{tabular}{l | l}
vertex & list \\ \hline
$x$ & 3,4 \\
$y$ & 1,2 \\ \hline
$a$ & 3,1 \\
$b$ & 4,2 \\
\end{tabular}

\medskip
Lists in the\\inequality gadget.
}
\end{minipage}
\caption{Modified equality and inequality gadgets. The segment $x$ is an occurrence segment and the segment $y$ is a variable segment. Segments $a,b$, and $x$ overlap.}
\label{fig-in-equality-mod}
\end{figure}
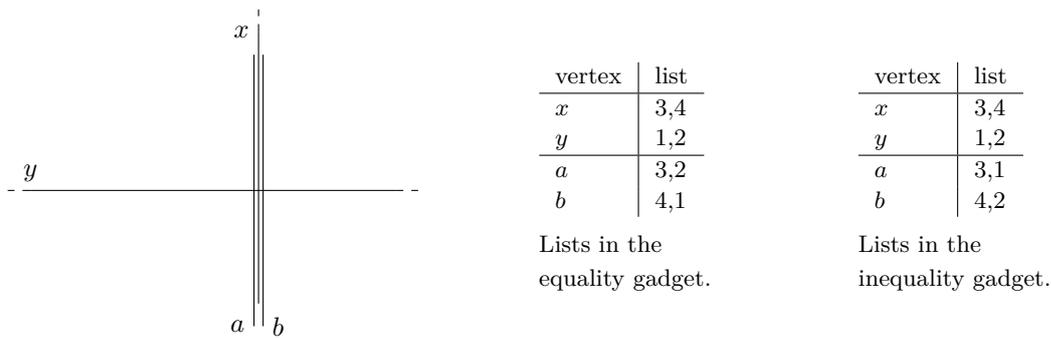
For a variable $v$ and its positive occurrence, we introduce an equality gadget joining the occurrence segment and the variable segment. Analogously, we introduce an inequality gadget joining the variable segment and the occurrence segment corresponding to a negative occurrence. The placement of all these segments is shown in Figure \ref{fig-equality-placement}.
\begin{figure}[h!]
\centering
\begin{tikzpicture}[scale=0.5]
\draw[red, line width = 1.5] (0,3) --++ (0,-6);
\draw[red, line width = 1.5] (1,3) --++ (0,-6);
\draw[red, line width = 1.5] (2,3) --++ (0,-6);
\draw[red] (-0.1,2.2) --++ (0,-5.5);
\draw[red] (0.1,2.2) --++ (0,-5.5);
\draw[red] (0.9,2.2) --++ (0,-5.5);
\draw[red] (1.1,2.2) --++ (0,-5.5);
\draw[red] (1.9,2.2) --++ (0,-5.5);
\draw[red] (2.1,2.2) --++ (0,-5.5);

\draw[blue, line width = 1.5] (3,3) --++ (0,-6);
\draw[blue, line width = 1.5] (4,3) --++ (0,-6);
\draw[blue] (2.9,1.7) --++ (0,-5);
\draw[blue] (3.1,1.7) --++ (0,-5);
\draw[blue] (3.9,1.7) --++ (0,-5);
\draw[blue] (4.1,1.7) --++ (0,-5);

\draw[violet, line width = 1.5] (5,3) --++ (0,-6);
\draw[violet, line width = 1.5] (6,3) --++ (0,-6);
\draw[violet, line width = 1.5] (7,3) --++ (0,-6);
\draw[violet] (4.9,1.2) --++ (0,-4.5);
\draw[violet] (5.1,1.2) --++ (0,-4.5);
\draw[violet] (5.9,1.2) --++ (0,-4.5);
\draw[violet] (6.1,1.2) --++ (0,-4.5);
\draw[violet] (6.9,1.2) --++ (0,-4.5);
\draw[violet] (7.1,1.2) --++ (0,-4.5);

\draw[red] (-0.5,2) --++ (10,0);
\draw[blue] (2.5,1.5) --++ (10,0);
\draw[violet] (4.5,1) --++ (10,0);

\end{tikzpicture} 
\caption{Placement of variable and occurrence segments along with (in)equality gadgets. Segments in one color correspond to one variable.
Literal segments are depicted by thick lines. Thin horizontal lines are variable segments and thin vertical lines are parts of (in)equality gadgets.}
\label{fig-equality-placement}
\end{figure}
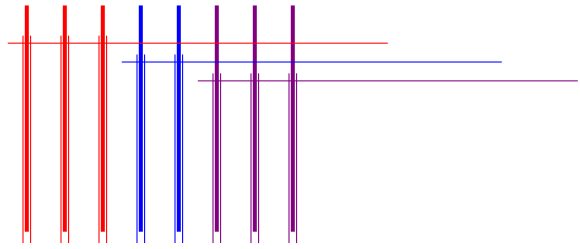

Now we need to make sure that the truth assignment given by the coloring of literal segments is consistent with the truth assignment given by the coloring of occurrence segments. We will do it in a very similar way as we synchronized the colorings of occurrence segments, i.e., by using auxiliary horizontal lines and equality gadgets.
Let $\ell_1,\ell_2,\ldots,\ell_{3m}$ denote the literals ordered as their corresponding literal segments (from left to right).
Let $o_1,o_2,\ldots,o_{3m}$ be the ordering of occurrences, again from left to right. Let $\sigma$ be the permutation of $[3m]$, such that the literal $\ell_i$ corresponds to the occurrence $o_{\sigma(i)}$.
Now, for every $i \in [n]$, we want to introduce an equality gadget between the literal segment corresponding to $\ell_i$ and the occurrence segment corresponding to $o_{\sigma(i)}$.

We observe that this is quite easy to do if $\sigma$ is either increasing or decreasing. As shown by Brandst\"{a}dt and Kratsch~\cite{BrandstadtK86}, each permutation of $[3m]$ can be partitioned into at most $z := \lceil \sqrt{6m+1/4}-1/2 \rceil = O(\sqrt{n})$ monotone sequences. Their proof is constructive and can be easily transformed into a polynomial algorithm finding such a partition. Let $\sigma_1,\sigma_2,\ldots,\sigma_z$ be the partition of $\sigma$, where each $\sigma_i$ is monotone.
We introduce $z$ {\em layers}, each corresponding to one $\sigma_i$. The $i$th layer is responsible to synchronize the colorings of literal segments and occurrence segments corresponding to elements of $\sigma_i$ (we call such segments {\em important} for layer $i$).
Figure~\ref{fig-layer} shows a single layer. Note that each layer contains copies of \emph{all} literal segments and occurrence segments. They appear in two groups -- literal segments on the left and occurrence segments on the right. The distance between leftmost and rightmost segment in one group is slightly less than 1/2, and the distance between the rightmost literal segment and the leftmost occurrence segment is slightly less than 2. This allows us to fit two equality gadgets, whose horizontal segments are collinear but non-overlapping (see Fig. \ref{fig-layer} and notice that we can adjust the distances within groups and between the groups, so that the distance between the horizontal segments is always positive and smaller than 1).
For each such a pair we introduce $k-1$ overlapping segments with lists $\{1,2,\ldots,k\}$, intersecting both of them and nothing else.
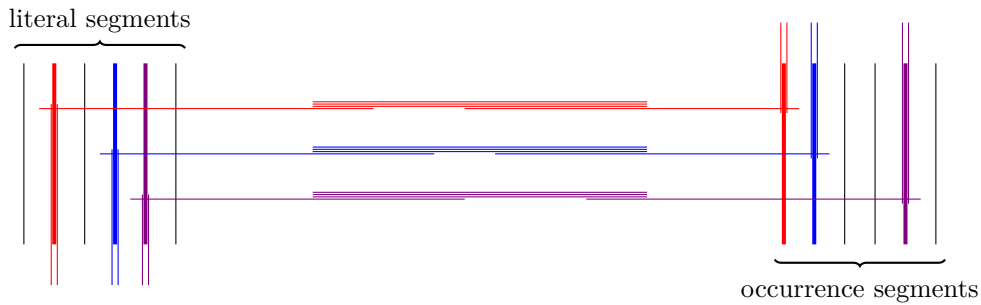
\begin{figure}[h!]
\centering
\begin{tikzpicture}[xscale=0.4,yscale=0.6]
\draw[] (0,3) --++ (0,-4);
\draw[red, line width = 1.5] (1,3) --++ (0,-4);
\draw[] (2,3) --++ (0,-4);
\draw[blue, line width = 1.5] (3,3) --++ (0,-4);
\draw[violet, line width = 1.5] (4,3) --++ (0,-4);
\draw[] (5,3) --++ (0,-4);

\draw[red, line width = 1.5] (25,3) --++ (0,-4);
\draw[blue, line width = 1.5] (26,3) --++ (0,-4);
\draw[] (27,3) --++ (0,-4);
\draw[] (28,3) --++ (0,-4);
\draw[violet, line width = 1.5] (29,3) --++ (0,-4);
\draw[] (30,3) --++ (0,-4);

\draw [thick,
    decoration={
        brace,        
        raise=0.2cm
    }, decorate] (-0.3,3) -- (5.3,3) node [pos=0.5,anchor=north,yshift=0.85cm] {literal segments}; 
    \draw [thick,
    decoration={
        brace,
        raise=0.2cm,
        mirror,
    }, decorate] (24.7,-1) -- (30.3,-1) node [pos=0.5,anchor=north,yshift=-0.35cm] {occurrence segments}; 

\draw[red] (0.5,2) --++ (11,0);    
\draw[red] (14.5,2) --++ (11,0); 

\draw[blue] (2.5,1) --++ (11,0); 
\draw[blue] (15.5,1) --++ (11,0); 

\draw[violet] (3.5,0) --++ (11,0);    
\draw[violet] (18.5,0) --++ (11,0);    

\draw[red] (0.9,2.1) --++ (0, -4);
\draw[red] (1.1,2.1) --++ (0, -4);
\draw[blue] (2.9,1.1) --++ (0, -3);
\draw[blue] (3.1,1.1) --++ (0, -3);
\draw[violet] (3.9,0.1) --++ (0, -2);
\draw[violet] (4.1,0.1) --++ (0, -2);

\draw[red] (24.9,1.9) --++ (0, 2);
\draw[red] (25.1,1.9) --++ (0, 2);
\draw[blue] (25.9,0.9) --++ (0, 3);
\draw[blue] (26.1,0.9) --++ (0, 3);
\draw[violet] (28.9,-0.1) --++ (0, 4);
\draw[violet] (29.1,-0.1) --++ (0, 4);

\draw[red] (9.5,2.05) --++ (11,0);    
\draw[red] (9.5,2.10) --++ (11,0);    
\draw[red] (9.5,2.15) --++ (11,0);    

\draw[blue] (9.5,1.05) --++ (11,0); 
\draw[blue] (9.5,1.10) --++ (11,0); 
\draw[blue] (9.5,1.15) --++ (11,0); 

\draw[violet] (9.5,0.05) --++ (11,0);    
\draw[violet] (9.5,0.10) --++ (11,0);    
\draw[violet] (9.5,0.15) --++ (11,0);    
\end{tikzpicture} 
\caption{A single layer $i$.
Thick colored segments indicate corresponding literal and occurrence segments, which are important for layer $i$.
Thin colored segments are parts of equality gadgets.}
\label{fig-layer}
\end{figure}

The last thing to do is to connect the literal segments (with (in)equality gadgets), layers, and occurrence segments with satisfiability gadgets. We place the occurrence segments at the bottom, and then we introduce layers in such a way that the corresponding vertical segments are collinear but non-intersecting. Finally, the literal segments with satisfiability gadgets are put on top.
Let $s_1$ and $s_2$ be collinear vertical segments and let $s_1$ be above $s_2$.
If  $s_1$ is an important literal segment of the $i$th layer, then there are two segments $a,b$ belonging to the appropriate equality gadget, which are covering the lower endpoint of $s_1$. We introduce $k-3$ vertical segments $q_1,q_2,\ldots,q_{k-3}$, intersecting only $s_1$,$a$,$b$ (so its upper endpoint is below the horizontal segment of the equality gadget, where $a$ and $b$ belong), and set their lists to $\{1,2,\ldots,k\}$. We adjust the distance between the layers, so that $a$, $b$, and $q$ intersect $s_2$, but no other vertex of its layer.

The situation is analogous if $s_2$ is an important occurrence segment of $i$th layer. Finally, if none of $s_1,s_2$ is an important segment of a layer, we introduce $k-1$ overlapping segments $q_1,q_2,\ldots,q_{k-1}$ with lists $\{1,2,\ldots,k\}$, intersecting $s_1$, $s_2$, and no other previously constructed segment. This way $s_1$ and $s_2$ are non-adjacent, but they are both intersected by three pairwise intersecting segments. This ensures that $s_1$ and $s_2$ will receive the same color.
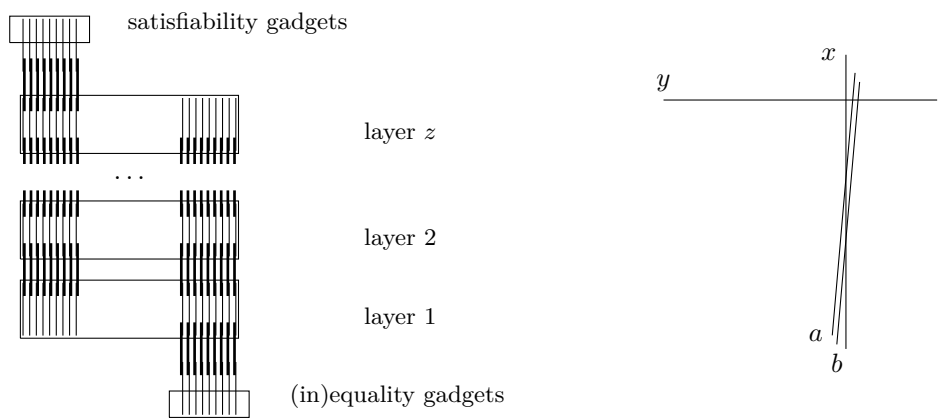
\begin{figure}[h!]
\centering
\begin{minipage}{.5\linewidth}%
\begin{tikzpicture}[scale=0.35]
\draw[] (6.0,0) --++ (0,2);
\draw[] (6.25,0) --++ (0,2);
\draw[] (6.5,0) --++ (0,2);
\draw[] (6.75,0) --++ (0,2);
\draw[] (7.0,0) --++ (0,2);
\draw[] (7.25,0) --++ (0,2);
\draw[] (7.5,0) --++ (0,2);
\draw[] (7.75,0) --++ (0,2);
\draw[] (8.0,0) --++ (0,2);

\draw (5.5,-0.1) --++ (3.0,0) --++ (0,1) --++ (-3, 0) --++ (0,-1) node [anchor=south,xshift=3cm] {\small (in)equality gadgets};

\draw[] (0.0,3) --++ (0,2);
\draw[] (0.25,3) --++ (0,2);
\draw[] (0.5,3) --++ (0,2);
\draw[] (0.75,3) --++ (0,2);
\draw[] (1.0,3) --++ (0,2);
\draw[] (1.25,3) --++ (0,2);
\draw[] (1.5,3) --++ (0,2);
\draw[] (1.75,3) --++ (0,2);
\draw[] (2.0,3) --++ (0,2);

\draw[] (6.0,3) --++ (0,2);
\draw[] (6.25,3) --++ (0,2);
\draw[] (6.5,3) --++ (0,2);
\draw[] (6.75,3) --++ (0,2);
\draw[] (7.0,3) --++ (0,2);
\draw[] (7.25,3) --++ (0,2);
\draw[] (7.5,3) --++ (0,2);
\draw[] (7.75,3) --++ (0,2);
\draw[] (8.0,3) --++ (0,2);

\draw (-0.1,2.9) --++ (8.2,0) --++ (0,2.2) --++ (-8.2, 0) --++ (0,-2.2) node [anchor=south,xshift=5cm] {\small layer 1};

\draw[] (0.0,6) --++ (0,2);
\draw[] (0.25,6) --++ (0,2);
\draw[] (0.5,6) --++ (0,2);
\draw[] (0.75,6) --++ (0,2);
\draw[] (1.0,6) --++ (0,2);
\draw[] (1.25,6) --++ (0,2);
\draw[] (1.5,6) --++ (0,2);
\draw[] (1.75,6) --++ (0,2);
\draw[] (2.0,6) --++ (0,2);

\draw[] (6.0,6) --++ (0,2);
\draw[] (6.25,6) --++ (0,2);
\draw[] (6.5,6) --++ (0,2);
\draw[] (6.75,6) --++ (0,2);
\draw[] (7.0,6) --++ (0,2);
\draw[] (7.25,6) --++ (0,2);
\draw[] (7.5,6) --++ (0,2);
\draw[] (7.75,6) --++ (0,2);
\draw[] (8.0,6) --++ (0,2);

\draw (-0.1,5.9) --++ (8.2,0) --++ (0,2.2) --++ (-8.2, 0) --++ (0,-2.2) node [anchor=south,xshift=5cm] {\small layer 2};

\draw[] (0.0,10) --++ (0,2);
\draw[] (0.25,10) --++ (0,2);
\draw[] (0.5,10) --++ (0,2);
\draw[] (0.75,10) --++ (0,2);
\draw[] (1.0,10) --++ (0,2);
\draw[] (1.25,10) --++ (0,2);
\draw[] (1.5,10) --++ (0,2);
\draw[] (1.75,10) --++ (0,2);
\draw[] (2.0,10) --++ (0,2);

\draw[] (6.0,10) --++ (0,2);
\draw[] (6.25,10) --++ (0,2);
\draw[] (6.5,10) --++ (0,2);
\draw[] (6.75,10) --++ (0,2);
\draw[] (7.0,10) --++ (0,2);
\draw[] (7.25,10) --++ (0,2);
\draw[] (7.5,10) --++ (0,2);
\draw[] (7.75,10) --++ (0,2);
\draw[] (8.0,10) --++ (0,2);

\draw (-0.1,9.9) --++ (8.2,0) --++ (0,2.2) --++ (-8.2, 0) --++ (0,-2.2) node [anchor=south,xshift=5cm] {\small layer $z$};

\draw[] (0.0,13) --++ (0,2);
\draw[] (0.25,13) --++ (0,2);
\draw[] (0.5,13) --++ (0,2);
\draw[] (0.75,13) --++ (0,2);
\draw[] (1.0,13) --++ (0,2);
\draw[] (1.25,13) --++ (0,2);
\draw[] (1.5,13) --++ (0,2);
\draw[] (1.75,13) --++ (0,2);
\draw[] (2.0,13) --++ (0,2);

\draw (-0.5,14.1) --++ (3.0,0) --++ (0,1) --++ (-3, 0) --++ (0,-1) node [anchor=south,xshift=3cm] {\small satisfiability gadgets};

\foreach \i in {0,...,8}
{
	\draw[line width=1] (\i/4+0.05,4.5) --++ (0,2);
	\draw[line width=1] (6+\i/4-0.05,4.5) --++ (0,2);
}
\foreach \i in {0,...,8}
{
	\draw[line width=1] (\i/4+0.05,11.5) --++ (0,2);
	\draw[line width=1] (6+\i/4-0.05,1.5) --++ (0,2);
}
\foreach \i in {0,...,8}
{
	\draw[line width=1] (\i/4+0.05,7.5) --++ (0,1);
	\draw[line width=1] (6+\i/4-0.05,7.5) --++ (0,1);
	\draw[line width=1] (\i/4+0.05,9.5) --++ (0,1);
	\draw[line width=1] (6+\i/4-0.05,9.5) --++ (0,1);
}
\node at (4,9) {$\ldots$};
\end{tikzpicture} 
\end{minipage}
\hfill
\begin{minipage}{.4\linewidth}%
\begin{tikzpicture}[scale=0.6]
\def\s{0.75}

\coordinate (yb) at (3,3) ;
\coordinate (ye) at (-3,3) ;
\coordinate (xb) at (1,-2.5) ;
\coordinate (xe) at (1,4) ;

\draw (yb) -- (ye) node [above] {$y$};
\draw (xb) -- (xe) node [left] {$x$};

\draw (1.2,3.6) --++ (-0.5,-5.8) node [left] {$a$};
\draw (1.3,3.4) --++ (-0.5,-5.8) node [below] {$b$};
\end{tikzpicture} 
\end{minipage}%
\caption{Overall construction (left). Modified equality gadgets in a unit 3-DIR graph (right).}
\label{fig-unit-overall}
\end{figure}
The number of segments in each layer is $O(n)$ and the number of layers is $z=O(\sqrt{n})$. Thus the total number of segments in our construction is $O(n^{3/2})$. This implies that an algorithm solving \LkColoring on unit \2dir graphs with $N$ vertices in time $2^{o(N^{2/3})}$ could be used to solve \3sat with $n$ variables in time $2^{o(n)}$, which contradicts the ETH.

If we want to obtain a reduction to the non-list \kColoring for any $k \geq 4$, we need to transport the reference coloring to each gadget.
However, this cannot be done for segments $a,b$ in our (in)equality gadgets (recall Theorem \ref{thm:2dir} and Figure \ref{fig-in-equality-mod}), which have non-trivial lists and are fully covered by other segments -- note that if a segment in color 4 intersects $a$, it will also intersect $x$ (note that the segments with lists $\{1,2,\ldots,k\}$, even if they are fully covered, are not problematic since they do not need the reference coloring).
But if we use a third direction, we can make $a,b$ intersect $x$ and $y$ (again, we use notation in Figure \ref{fig-in-equality-mod}) and no other vertex with non-trivial list in our grid-like structure -- it is enough to choose their slope to be ``almost vertical'' (see Fig.~\ref{fig-unit-overall}).  This shows the claimed lower bound, for every $k \geqslant 4$, for \Coloring{k} of unit \threedir graphs and completes the proof.
\end{proof}

We show that on segment graphs, \mds, \mcds, and \mids are unlikely to admit a subexponential algorithm.

\begin{theorem}\label{thm:mds-seg}
\textsc{Min (Connected) Dominating Set} cannot be solved in time $2^{o(n)}$ on segment graphs with $n$ vertices, unless the ETH fails.
\end{theorem}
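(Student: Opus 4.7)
The plan is to reduce from \3sat. Given a formula $\Phi$ with $N$ variables and $M$ clauses, I would construct in polynomial time a segment graph $G$ with $n = O(N+M)$ vertices and a target value $k = O(N+M)$ such that $\Phi$ is satisfiable if and only if $G$ admits a (connected) dominating set of size at most $k$. Combined with the sparsification lemma of Impagliazzo et al., which under the ETH rules out a $2^{o(N+M)}$-time algorithm for \3sat, this immediately yields the claimed $2^{o(n)}$ lower bound.

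For each variable $v_i$ I would design a constant-size \emph{variable gadget} containing two ``literal'' segments $T_i$ and $F_i$, together with a small forcing structure: private pendant-like segments attached to $T_i$ and $F_i$ and a shared guard whose only external dominators are $T_i$ or $F_i$. The gadget is calibrated so that any dominating set spends a fixed constant number of vertices inside it, and in every tight solution exactly one of $\{T_i,F_i\}$ is chosen; this encodes the truth value of $v_i$. For each clause $C_j$ the \emph{clause gadget} consists of a distinguished segment $c_j$ whose closed neighbourhood, apart from a constant number of gadget-internal vertices, contains only the literal segments corresponding to the three literals of $C_j$; with the budget set tight, $c_j$ must then be dominated from the outside, i.e.\ by a satisfied literal. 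A satisfying assignment of $\Phi$ maps bijectively to a dominating set of size exactly $k$, and conversely.

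The geometric realisation is where the work concentrates, and it would follow the flavour of the layouts used in Theorems~\ref{thm:2dir-lists} and~\ref{thm:unit-2dir-lists}: variable gadgets are stacked in rows indexed by $i$ and clause gadgets in columns indexed by $j$. Inside row $i$, $T_i$ and $F_i$ are horizontal segments at slightly offset heights $y_i \pm \varepsilon$; inside column $j$, $c_j$ and its auxiliary vertices are short vertical segments. At each cell $(i,j)$ corresponding to an occurrence of $v_i$ in $C_j$ I would insert a constant-size connector sub-gadget that ties $c_j$ either to $T_i$ or to $F_i$ according to the polarity of the occurrence. To stay linear in $N+M$, each variable's horizontal arm is truncated so that it spans only those columns where $v_i$ actually occurs; since $\sum_i \mathrm{occ}(v_i) = 3M$, the total segment count remains $O(N+M)$.

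The main obstacle, highlighted in the introduction, is geometric rather than combinatorial: we cannot afford the $\Omega(NM)$ crossover gadgets of planar reductions, yet we must simultaneously produce the required intersections at occurrence cells and avoid spurious edges between distant variable arms and unrelated clause columns. I would address this by truncating variable arms, using small vertical offsets between $T_i$ and $F_i$, and keeping each clause segment short enough to reach only as deep as its lowest occurrence; these choices exploit the extra freedom of segment graphs over planar graphs without breaking the linear-size constraint. For \mcds, I would further add an $O(N+M)$-vertex ``spine'' that chains through every gadget and raise the budget by the constant contribution of the spine at each gadget, so that any minimum dominating set extends to a connected dominating set of the same asymptotic size, preserving the $2^{o(n)}$ lower bound. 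Correctness of the reduction then amounts to a standard budget-tightness argument in the spirit of the other reductions in Section~\ref{sec:lower}.
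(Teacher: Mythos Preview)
Your high-level plan (a linear-size reduction from \3sat together with the sparsification lemma) is the right one, and your combinatorial gadget logic is sound. The gap is in the geometric realisation. In a grid-like layout, a horizontal arm $T_i$ that reaches all the columns where $v_i$ occurs is a \emph{single} segment and therefore crosses every intermediate column as well; likewise, a vertical connector running from the clause row down to ``its lowest occurrence'' at height $y_i$ crosses every row $y_{i'}$ with $i'<i$. In a domination reduction these spurious adjacencies are fatal: if $c_j$ (or its connector) becomes adjacent to some $T_{i'}$ with $v_{i'}\notin C_j$, then choosing $T_{i'}$ can dominate $c_j$ without satisfying $C_j$, and the ``only if'' direction collapses. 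Your proposed fixes (truncation, small vertical offsets, shortening $c_j$) do not eliminate this; any axis-parallel grid with $\Theta(N)$ rows and $\Theta(M)$ columns that routes $\Theta(M)$ occurrences through long segments necessarily creates $\Omega(NM)$ unwanted crossings somewhere, unless you pay for per-crossing crossover gadgets, which would blow the size to $\Theta(NM)$.

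The paper sidesteps the grid entirely with a much simpler radial picture. First it normalises $\phi$ so that every literal occurs \emph{exactly twice}. The $M$ clause segments are placed as tiny, pairwise-disjoint segments on a circle. For each literal $\sigma x_i$ one draws a single segment $s(\sigma x_i)$ as (essentially) the chord through its two clause segments, hitting only those two among the clause segments; then all $2N$ literal segments are prolonged until they pairwise intersect, so they induce a clique. Finally, for each $i$ a tiny segment $s(i)$ is placed near $s(x_i)\cap s(\neg x_i)$, touching only those two. That is the whole construction: $3N+M$ segments, target $k=N$. Dominating every $s(i)$ forces one literal per variable; dominating every clause segment forces a satisfying assignment; and because the chosen literals already form a clique, the dominating set is automatically connected, so no spine is needed. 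The two ideas you were missing are (i) bounding literal multiplicity so that a single straight segment can encode a literal while touching only the intended clauses, and (ii) making the literal segments a clique, which simultaneously kills the crossover problem and gives connectedness for free.
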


\begin{proof}
Using standard tricks we can transform an arbitrary \3sat formula $\psi$ with $N'$ variables and $M'$ clauses
into an equivalent {\sc Cnf-Sat} formula $\phi$ with $N=O(N')$ variables and $M=O(M')$ clauses, where each variable appears exactly twice positively and twice negatively.

Put $M$ pairwise-disjoint small segments on a circle as shown in Figure~\ref{fig:mds-seg}; $M$ slightly perturbed points work as long as no two pairs define the same direction. Each small segment $s(C_j)$ represents a distinct clause $C_j$.
For each literal $\sigma x_i$, where $x_i$ is one of the $N$ variables appearing in $\phi$ and $\sigma \in \{\emptyset,\neg\}$, we add a segment $s(\sigma x_i)$ crossing only the two small segments $s(C_j)$ and $s(C_{j'})$ corresponding to the clauses this literal satisfies.
We prolong all the segments corresponding to literals to make them pairwise intersect.

For each pair of literals $x_i, \neg x_i$, we add a small segment $s(i)$ near the intersection of $s(x_i)$ and $s(\neg x_i)$ which intersects only $s(x_i)$ and $s(\neg x_i)$. This finishes the construction (see Figure~\ref{fig:mds-seg}).
Note that the total number of segments is $n := 3N+M=\Theta(N)$.
We claim that there is a dominating set of size $N$ in the intersection graph if and only if $\phi$ is satisfiable.

Indeed, to dominate all the segments $s(i)$ for $i \in [N]$, one has to take exactly one of $s(x_i)$ and $s(\neg x_i)$.
The choice of the literals will only dominate all the segments $s(C_j)$ for $j \in [M]$ if the chosen dominating set corresponds to a satisfying assignment. The reverse direction is straightforward.

Moreover, as every pair of segments representing literals intersects, the dominating set encoding the satisfiable assignment is connected (it even induces a clique).
   \begin{figure}[h!]
    \centering
    \begin{tikzpicture}
      \def\m{9}
      \def\n{6}
      \foreach \i in {1,...,\m}{
        \begin{scope}
          \node at (360/\m * \i - 360/\m:3) {$s(C_{\i})$} ;
          \draw (360/\m * \i - 360/\m+5:2.5) -- (360/\m * \i - 360/\m-5:2.5) ;
        \end{scope}
      }
      \foreach \i/\j/\k in {1/4/black,2/5/red,3/7/blue,4/9/green,1/8/green,2/7/purple,3/5/purple,4/6/red,1/6/blue,2/8/black,3/9/cyan,5/7/cyan}{
        \draw[color=\k] (360/\m * \i - 360/\m:2.55) -- (360/\m * \j - 360/\m:2.55) ;
      }
      \draw (-0.8,-0.5) -- (-0.6,-0.7) ;
      \node at (-1,-0.4) {\footnotesize{$\sigma(1)$}} ;
      \begin{scope}[xshift=-1cm,yshift=1.6cm]
        \draw (-0.8,-0.5) -- (-0.6,-0.7) ;
        \node at (-0.5,-0.85) {\footnotesize{$\sigma(2)$}} ;
      \end{scope}
      \draw (1.4,-1.22) -- (1.7,-1.22) ;
      \node at (1.55,-1.45) {\footnotesize{$\sigma(3)$}} ;
      \draw (1.43,0.52) -- (1.73,0.52) ;
      \node at (1.58,0.3) {\footnotesize{$\sigma(4)$}} ;

      \node at (-0.7,0.4) {\footnotesize{$\sigma(x_1)$}} ;
      \node at (0.1,-0.6) {\footnotesize{$\sigma(\neg x_1)$}} ;
    \end{tikzpicture}
    \caption{An example with 9 clauses and 6 variables. The segments $\sigma(x_i)$ and $\sigma(\neg x_i)$ can be inferred from $\sigma(i)$ or from the colors (although we only specified which is which for variable $x_1$). Observe that the light blue and purple pairs do not intersect inside the circle so those segments should be prolonged outside it until they meet (this part is not shown in the picture).}
    \label{fig:mds-seg}
  \end{figure}
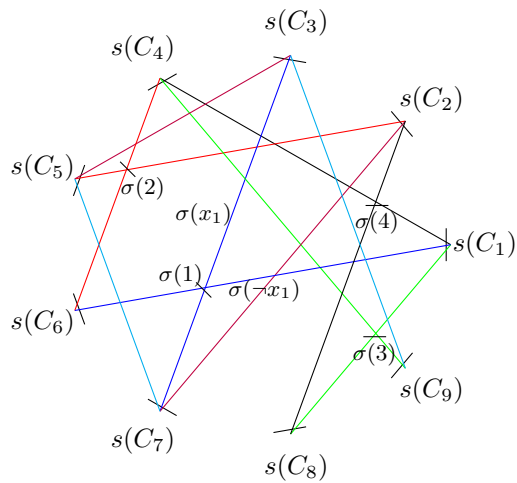  
\end{proof}

The hardness proof for \mids has to be quite different and is trickier.

\begin{theorem} \label{thm:mids-seg}
\mids cannot be solved in time $2^{o(n)}$ on segment graphs with $n$ vertices, unless the ETH fails.
\end{theorem}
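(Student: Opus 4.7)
The plan is to adapt the construction of Theorem~\ref{thm:mds-seg} to enforce the independence constraint that \mids requires. Starting from a \3sat instance $\phi$ with $N$ variables and $M = O(N)$ clauses (assuming, via the sparsification lemma, that every variable occurs a bounded number of times), the goal is to produce a segment graph on $n = O(N+M)$ segments whose minimum independent dominating set has size at most $N$ if and only if $\phi$ is satisfiable. For each variable $x_i$ I introduce a \emph{choice triangle} $T_i$ consisting of three pairwise intersecting segments $s(x_i)$, $s(\neg x_i)$, $s(i)$, and for each clause $C_j$ a segment $s(C_j)$ arranged to meet exactly the literal segments $s(\sigma x_i)$ for $\sigma x_i \in C_j$. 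No other intersections should appear: in particular, $s(i)$ touches only its two triangle-mates, and literal segments of distinct variables are pairwise disjoint.

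Given such a realization, the combinatorial correctness is immediate. The vertex $s(i)$ has only two neighbours, both inside $T_i$, so every independent dominating set $D$ must hit each triangle, giving $|D| \geq N$. If $\phi$ is satisfied by an assignment $\alpha$, then $D := \{s(x_i) : \alpha(x_i)=T\} \cup \{s(\neg x_i) : \alpha(x_i)=F\}$ has size exactly $N$, is independent (since literals of different variables do not intersect), dominates each triangle (via the chosen literal), and dominates every $s(C_j)$ (since each clause contains at least one satisfied literal). Conversely, any $D$ with $|D| \leq N$ picks exactly one vertex per triangle and no clause segment, so every $s(C_j)$ is dominated by one of its literal segments in $D$, yielding an assignment satisfying $\phi$ (variables whose triangle contributes $s(i)$ can be set arbitrarily). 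Together with the sparsification lemma, this linear reduction produces the claimed $2^{o(n)}$ lower bound.

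The main obstacle will be the geometric realization: we need a linear number of segments producing exactly the intended intersections and no others, and in particular avoiding crossings between literal segments of distinct variables -- precisely the property that was used and \emph{created} in the \mds proof. A first attempt is to place the $M$ clause segments as small arcs around a circle, represent each literal by a short chord through the $O(1)$ clauses it satisfies, and arrange the pair $(s(x_i), s(\neg x_i))$ to cross at a local pivot where $s(i)$ is inserted as a tiny crossing segment. The delicate point is choosing the circular order of clauses so that, for each variable, its four relevant clause positions alternate (producing the intended pivot) while no other variable's chord separates them. I expect to handle this by preprocessing $\phi$ into a structured variant whose variable--clause incidence is geometrically embeddable in this sense, at the cost of only a linear overhead; or, failing that, by replacing the single circle with a constant number of parallel ``tracks'' on which problematic pairs of variables can be separated.
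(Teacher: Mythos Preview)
Your combinatorial reduction is clean and correct \emph{provided} the segment realization exists, but the realization is where the real difficulty lies, and your sketch does not close it. What you are asking for is a family of $2N$ literal segments that are pairwise disjoint across distinct variables, yet each one hits a prescribed $O(1)$-subset of the $M$ clause segments. With straight segments this forces a planarity-type constraint: once two literal segments are required to touch clause segments whose positions interleave (on your circle, or on any fixed curve carrying the clauses), the two segments must cross. Your ``first attempt'' of chords on a circle makes this explicit --- non-crossing chords with prescribed endpoints is exactly an outerplanarity condition on the literal--clause incidence. Your first fallback, preprocessing $\phi$ into a variant with this embeddability, would have to land in something like planar or outerplanar \textsc{3-Sat}; but those variants are solvable in time $2^{O(\sqrt{N})}$, so a linear reduction \emph{from} them cannot yield a $2^{o(n)}$ lower bound. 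Your second fallback (a constant number of tracks) does not help either: the number of interleaving pairs of variables can be $\Omega(N)$, so no bounded number of layers separates them.

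The paper's proof avoids this obstacle by \emph{not} insisting that literal segments of different variables be disjoint. Instead it builds a constant-size variable gadget $G(x_i)$ and a constant-size clause gadget $G(C_j)$, each equipped with dummy segments that force any minimum independent dominating set to spend exactly three vertices inside every gadget and none on the linking segments between gadgets. Those linking segments may therefore cross one another arbitrarily without affecting either independence or domination of the intended solution, which is precisely what lets the construction handle an arbitrary (non-planar) formula with only $O(N+M)$ segments. If you want to salvage your approach, you would need a mechanism of this kind --- local gadgets that absorb the budget and render inter-gadget crossings harmless --- rather than a global non-crossing layout.
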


\begin{proof}
  We reduce from \textsc{3-SAT} on $N$ variables and $M$ clauses, where each literal appears at most four times, and each clause contains \emph{exactly} three literals.
  This restriction was shown NP-complete by Tovey \cite{Tovey84}.
  We can further assume that each \emph{literal} appears at most three times.
  Indeed, if the same literal appears four times, then by definition its negation cannot appear in the whole instance.
  Hence, the variable can be set so as to satisfy the four corresponding clauses.
  
The variable gadget $G(x_i)$ for variable $x_i$ consists of three parallel segments $T_i$ representing positive occurrences crossing three parallel segments $F_i$ representing negative occurrences.
Those two sets of segments intersect three dummy pairs of parallel segments as shown on Figure~\ref{fig:mids-seg-gadgets}.
Note that even if a literal appears strictly fewer than three times, we keep exactly three parallel segments to encode it.

The clause gadget $G(C_j)$ for the clause $C_j$ consists of three pairwise crossing segments, drawn in blue in Figure~\ref{fig:mids-seg-gadgets}, each of which crosses one red segment. Each pair of blue and red segment corresponds to one of literals of $C_j$.
Additionally, all blue segments are intersected by four parallel dummy segments, and each pair of crossing blue and red segment has a private segment, crossing both of them.

  \begin{figure}[h!]
    \centering
    \begin{tikzpicture}[rotate=90, scale = 1.5]
        \draw (-0.4,0.4) -- (2,-0.1) ;
        \draw (-0.2,0.25) -- (2,-0.2) ;
        \draw (0,0.1) -- (2,-0.3) ;
        \draw (-0.4,-0.4) -- (2,0.1) ;
        \draw (-0.2,-0.25) -- (2,0.2) ;
        \draw (0,-0.1) -- (2,0.3) ;

        \draw (0.05,-0.15) -- (0.05,0.15) ;
        \draw (0.12,-0.15) -- (0.12,0.15) ;
      
        \draw (-0.08,-0.3) -- (-0.08,0.3) ;
        \draw (-0.15,-0.3) -- (-0.15,0.3) ;
        
        \draw (-0.23,-0.45) -- (-0.23,0.45) ;
        \draw (-0.3,-0.45) -- (-0.3,0.45) ;
        
        \node at (1.8,0.5) {$F_i$} ;
        \node at (1.8,-0.5) {$T_i$} ;
     
      \end{tikzpicture}
      \hskip 3cm
      \begin{tikzpicture}[scale = 2]
		\draw (-0.5,0.2) --++ (1,0);      
		\draw (-0.5,0.25) --++ (1,0);
		\draw (-0.5,0.3) --++ (1,0);
		\draw (-0.5,0.35) --++ (1,0);
      
		\draw[color=blue] (-0.5,0.4) --++ (1,-1);   
		\draw[color=blue] (0,0.4) --++ (0,-1);  
		\draw[color=blue] (0.5,0.4) --++ (-1,-1);
		
		\draw[color=red] (-0.5,-0.3) --++ (0.3,-1);  
		\draw[color=red] (-0.05,-0.3) --++ (0.3,-1);
		\draw[color=red] (0.3,-0.3) --++ (0.3,-1);
		
		\draw (-0.5,-0.5) --++ (0.2,0);
		\draw (-0.1,-0.5) --++ (0.2,0);
		\draw (0.3,-0.5) --++ (0.2,0);		      

      
    \end{tikzpicture}
    \caption{The variable gadget $G(x_i)$ for the variable $x_i$ (left) and the clause gadget $G(C_j)$ for the clause $C_j$ (right).}
    \label{fig:mids-seg-gadgets}
  \end{figure}
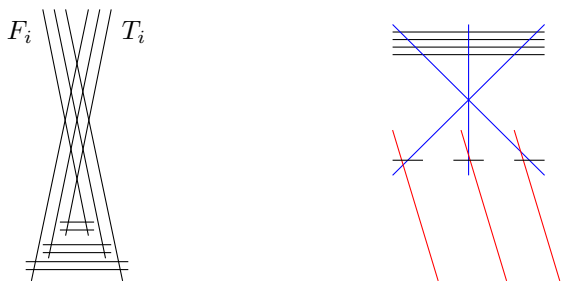

Now, for every literal $(\neg)x_i$ belonging to the clause $C_j$, we add a new segment denoted by $(\neg)x_i \in C_j$.
This segment crosses one non-dummy segment in $G(x_i)$: a segment of $T_i$ if the literal is positive, and a segment of $F_i$ otherwise. Moreover, it crosses one literal (red) segment from $G(C_j)$, and no other segments in variable and clause gadgets, see Figure~\ref{fig:mids-seg-overall}. 
Those lastly introduced segments cross exactly once each literal segment and at most once each segment of $\bigcup_{i \in [n]} T_i \cup F_i$.

We claim that such a constructed graph has an independent dominating set of size at most $3N+3M$ if and only if the initial formula is satisfiable.
  
  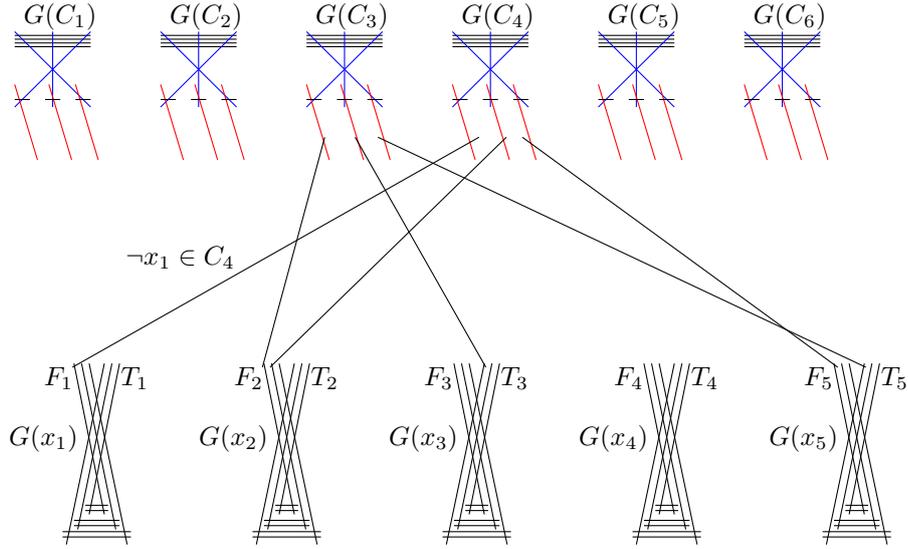
\begin{figure}[h!]
    \centering
    \begin{tikzpicture}
      \def\n{5}
      \def\m{6}
      \foreach \i in {1,...,\n}{ 
      \begin{scope}[xshift=2.5 * \i cm, rotate=90]    
      
        \draw (-0.4,0.4) -- (2,-0.1) ;
        \draw (-0.2,0.25) -- (2,-0.2) ;
        \draw (0,0.1) -- (2,-0.3) ;
        \draw (-0.4,-0.4) -- (2,0.1) ;
        \draw (-0.2,-0.25) -- (2,0.2) ;
        \draw (0,-0.1) -- (2,0.3) ;

        \draw (0.05,-0.15) -- (0.05,0.15) ;
        \draw (0.12,-0.15) -- (0.12,0.15) ;
      
        \draw (-0.08,-0.3) -- (-0.08,0.3) ;
        \draw (-0.15,-0.3) -- (-0.15,0.3) ;
        
        \draw (-0.23,-0.45) -- (-0.23,0.45) ;
        \draw (-0.3,-0.45) -- (-0.3,0.45) ;
        
        \node at (1.8,0.5) {$F_\i$} ;
        \node at (1.8,-0.5) {$T_\i$} ;

        \node at (1,0.7) {$G(x_\i)$} ;
      \end{scope}
      }

      \foreach \i in {1,...,\m}{ 
      \begin{scope}[xshift=1.92 * \i cm, yshift=6cm]
		\draw (-0.5,0.2) --++ (1,0);      
		\draw (-0.5,0.25) --++ (1,0);
		\draw (-0.5,0.3) --++ (1,0);
		\draw (-0.5,0.35) --++ (1,0);
      
		\draw[color=blue] (-0.5,0.4) --++ (1,-1);   
		\draw[color=blue] (0,0.4) --++ (0,-1);  
		\draw[color=blue] (0.5,0.4) --++ (-1,-1);
		
		\draw[color=red] (-0.5,-0.3) --++ (0.3,-1);  
		\draw[color=red] (-0.05,-0.3) --++ (0.3,-1);
		\draw[color=red] (0.3,-0.3) --++ (0.3,-1);
		
		\draw (-0.5,-0.5) --++ (0.2,0);
		\draw (-0.1,-0.5) --++ (0.2,0);
		\draw (0.3,-0.5) --++ (0.2,0);	
      \node at (0.1,0.6) {$G(C_\i)$} ;

      \end{scope}
      }

      \def\a{1.95}
      \def\b{5}
      \draw (4.68,\a) -- (5.5,\b) ;
      \draw (7.62,\a) -- (5.9,\b) ;
      \draw (12.62,\a) -- (6.2,\b) ;

      \draw (2.18,\a) -- (7.53,\b) ;
      \draw (4.78,\a) -- (7.89,\b) ;
      \draw (12.25,\a) -- (8.1,\b) ;

      \node at (3.6,3.4) {$\neg x_1 \in C_4$} ;
    \end{tikzpicture}
    \caption{The overall picture. We only represented two clauses: $C_3=\neg x_2 \lor x_3 \lor x_5$ and $C_4=\neg x_1 \lor \neg x_2 \lor \neg x_5$.}
    \label{fig:mids-seg-overall}
  \end{figure}

%

First, suppose that $A$ is a satisfying assignment. If $x_i$ is set to true by~$A$, we select the three segments of $T_i$ in the solution, otherwise we select the three segments of $F_i$. 
Now consider a clause $C_j$.
Since $A$ is satisfying, it contains at least one true literal.
In $G(C_j)$ we select the blue segment corresponding to a true literal and the red segments corresponding to the other two literals.
Note that this way we select $3N+3M$ segments and the selected set dominates all segments from all gadgets.
Let us consider a segment $(\neg)x_i \in C_j$.
Either it is dominated by one of the selected red segments in $G(C_j)$.
Or it corresponds to a true literal of $C_j$, so it is dominated by a selected segment in $G(x_i)$.

On the other hand, assume that there is an independent dominating set $S$ of size at most $3N + 3M$.
Notice that in order to dominate all dummy segments in $G(x_i)$, we need to select at least three segments from $G(x_i)$, and if we want to select exactly three, we need to choose either all segments in $T_i$, or all segments in $F_i$.
Analogously, we need to select at least three segments from each $G(C_j)$, and if we want to select exactly three, we need do choose one of blue segments and thus we cannot choose its corresponding red segment.
Note that since the total size of $S$ is $3N+3M$, we need to select exactly three segments in each gadget, and no segment $(\neg)x_i \in C_j$ is selected.

We define the assignment $A$ as follows: if segments $T_i$ are in $S$, then $x_i$ is set true, and otherwise $x_i$ is set false. Suppose that $C_j$ is not satisfied by $A$, which means that all its literals are false. This means that the three segments joining appropriate variable gadgets with $G(C_j)$ are not dominated by segments in variable gadgets, so $S$ must contain all three red segments from $G(C_j)$. However, this way the horizontal segments from $G(C_j)$ are not dominated, a contradiction.

The total number of segments is bounded by $12N+13M+3M=O(N+M)$, so the claim holds.
\end{proof}

\begin{theorem} \label{thm:mcli-str} 
\mcli cannot be solved in time $2^{o(n)}$ on strings graphs with $n$ vertices, unless the ETH fails.
\end{theorem}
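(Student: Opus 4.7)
The plan is to reduce from the bounded-occurrence variant of \3sat (each variable in a constant number of clauses), which remains hard to solve in time $2^{o(N+M)}$ under the ETH by the sparsification lemma. Given a formula $\phi$ with $N$ variables and $M$ clauses, I would build a string graph on $n=\Theta(N+M)$ vertices whose maximum clique has size at least $M$ if and only if $\phi$ is satisfiable; a $2^{o(n)}$ algorithm for \mcli on string graphs would then refute the ETH.

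The abstract target is the standard \textsc{Sat}-to-\textsc{Clique} compatibility graph $H$: one vertex $v(\ell,j)$ per occurrence of literal $\ell$ in clause $C_j$, with $v(\ell,j)$ adjacent to $v(\ell',j')$ iff $j \neq j'$ and $\ell \neq \neg\ell'$. A clique of size $M$ is forced to pick exactly one literal per clause (two vertices from the same clause are non-adjacent) and the chosen literals must be pairwise non-contradictory, i.e. form a satisfying assignment; conversely a satisfying assignment yields such a clique. Since each literal appears a bounded number of times, $H$ has $3M = O(N+M)$ vertices, so it remains to realize it as a string graph with $O(N+M)$ strings.

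For the geometric realization I would arrange $M$ ``clause regions'' in which the three literal-strings of a clause are drawn as mutually disjoint short pieces, together with $N$ ``variable regions'' $S_1,\ldots,S_N$, each containing a horizontal separator that forces every string carrying $x_i$ to stay above it and every string carrying $\neg x_i$ below it; contradictory pairs then never meet. Each literal-string is finally extended through a common mixing region in which any two strings from different clauses that are not separated by some $S_i$ are forced to cross exactly once.

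The main obstacle will be to make the mixing region implement the correct intersection pattern with only $O(N+M)$ strings: the $\Theta(M^2)$ required compatible crossings must be produced implicitly by the routing of a bounded number of strings per clause and per variable, while we simultaneously avoid introducing any intersection between two same-clause strings or between a literal and its negation. A plausible way out is to bundle each variable's positive (resp. negative) occurrences into a constant-size ``comb'' above (resp. below) its separator, and then interleave these $2N$ combs through a grid-like global layout whose row/column structure guarantees one crossing per compatible pair while the horizontal separators of each $S_i$ kill every contradictory one.
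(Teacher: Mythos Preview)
Your target graph differs from the paper's, and that difference is exactly where your gap lies. You try to realize the bare \emph{compatibility graph} $H$ (one string per literal occurrence, non-edges precisely for same-clause pairs and for contradictory pairs) on $3M$ strings and ask for a clique of size $M$. The paper instead introduces $2N$ additional \emph{variable strings}---one for $x_i$ and one for $\neg x_i$---and asks for a clique of size $N+M$. The $2N$ variable strings are drawn as a co-cluster $\mathrm{CC}_{N,2}$ (two parallel segments per variable, bundles at distinct slopes), the $3M$ literal strings as a co-cluster $\mathrm{CC}_{M,3}$, and between the two groups the paper draws an almost complete biclique $K_{2N,3M}$ minus $3M$ edges: the literal curve $c(\ell^j_i)$ meets every variable curve except the one of the opposite literal. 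Both co-clusters and the grid-with-one-gap-per-column are trivial to draw explicitly.

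The point of the extra $2N$ strings is that in the paper's graph \emph{any two literal strings from different clauses are allowed to intersect}, even an $x_i$-occurrence and a $\neg x_i$-occurrence; consistency of the assignment is enforced not by a missing literal--literal edge but by the missing literal--variable edge. That is precisely the obstacle you flag and do not resolve. Your separator sketch cannot work as written: if the horizontal separator in $S_i$ confines $x_i$-strings above it and $\neg x_i$-strings below it only \emph{inside} $S_i$, nothing prevents them from crossing in the mixing region, so ``contradictory pairs then never meet'' is unjustified. If the separators act \emph{globally}, then for $i\neq j$ with separator $i$ above separator $j$, every $x_i$-string lies entirely above separator $j$ and can never meet any $\neg x_j$-string, destroying required edges of $H$. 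Your comb/grid idea would have to route each string on the correct side of exactly one separator while remaining free at all others and still crossing every compatible string; you have not shown how, and externalizing the assignment into $2N$ variable strings is the clean way the paper sidesteps the whole issue.
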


\begin{proof}
We reduce from \3sat with a linear number of clauses, where every clause contains exactly three literals.
Let $\phi$ be an instance with $N$ variables and $M=\Theta(N)$ clauses.
For any positive integers $p$ and $s$, the {\em co-cluster} $\text{CC}_{p,s}=K_{s,s,\ldots,s~\text{(p times)}}$ can be represented as in Figure \ref{fig:mcli-str-cocluster}.
\begin{figure}[h!]
    \centering
    \begin{tikzpicture}
      \def\hb{-3}
      \def\he{3}
      \foreach \i in {-3,...,3}{
        \begin{scope}[rotate=10*\i]
          \draw (\hb,0) -- (\he,0) ;
          \draw (\hb,0.1) -- (\he,0.1) ;
          \draw (\hb,0.2) -- (\he,0.2) ;
        \end{scope}
      }
    \end{tikzpicture}
    \caption{Realization of a co-cluster $\text{CC}_{p,s}=K_{s,s,...,s}$ with $s=3$ and $p=7$.}
    \label{fig:mcli-str-cocluster}
\end{figure}
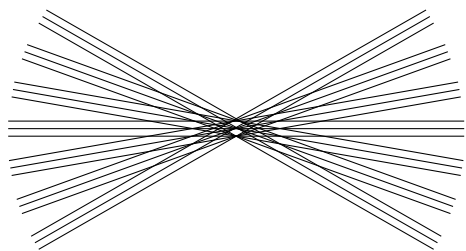
We encode the $N$ variables by $2N$ curves representing \emph{true} and \emph{false} for each variable by a co-cluster $\text{CC}_{N,2}$ and the $M=\Theta(N)$ clauses by $3M$ curves each representing a distinct literal in a clause by a co-cluster $\text{CC}_{M,3}$.  There are in total $n := 2N+3M=\Theta(N)$ curves.
We make the $2N$ \emph{variable} curves intersect the $3M$ \emph{literal} curves in a grid-like way.
They form an almost complete biclique $K_{2N,3M}$ where $3M$ edges are removed.

More precisely, the \emph{literal} curve $c(l^j_i)$ ($j \in [M]$, $i \in [3]$) intersects every \emph{variable} curve but $c(\sigma x_k)$ ($k \in [N]$) encoding the $k$-th variable with sign $\sigma \in \{\emptyset,\neg\}$ for which $l^j_i$ and $\sigma w_k$ are opposite literals (see Figure~\ref{fig:mcli-str-overall}).
\begin{figure}[h!]
    \centering
    \begin{tikzpicture}[scale=0.5]
      \def\v{6}
      \def\hb{0}
      \def\he{10}
      \def\vb{-0.5}
      \def\ve{8}   
      \def\o{0.45}
      
      \foreach \i in {1,...,\v}{
        \node at (\hb-0.9,7-\i) {\scriptsize{$c(\neg x_\i)$}} ;
        \node at (\hb-0.7,7-\i+\o) {\scriptsize{$c(x_\i)$}} ;
        \begin{scope}[rotate=0.2*\i-0.6]
        \draw (\hb,\i) -- (\he,\i) ;
        \draw (\hb,\i+\o) -- (\he,\i+\o) ;
        \end{scope}
      }

      \foreach \i/\j/\k/\h in {1/4/-2/0,1.25/6/-1.5/0,1.5/1.5/-1/0, 3/2/-0.3/2,3.25/3.5/0.2/2,3.5/5.5/0.7/2, 5/4.5/1.5/4,5.25/3/2/4,5.5/1.5/2.5/4}{
       \draw (\i,\j+0.25) -- (\i,\ve-0.5*\k) -- (\he+2-0.5*\k+\h,\ve-0.5*\k) -- (\he+2-0.5*\k+\h,\vb+0.5*\k) -- (\i,\vb+0.5*\k) -- (\i,\j-0.25) ;
      }
      \node at (3,9.3) {\scriptsize{$x_3 \lor x_1 \lor \neg x_6$}} ;
    \end{tikzpicture}
    \caption{The representation of 3 clauses: $x_3 \lor x_1 \lor \neg x_6$, $x_5 \lor \neg x_4 \lor \neg x_2$, and $\neg x_3 \lor x_4 \lor \neg x_6$.}
    \label{fig:mcli-str-overall}
  \end{figure}
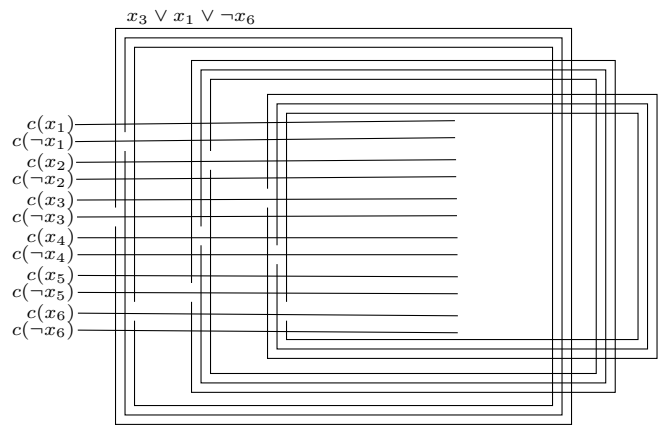
It is easy to observe that there is a clique of size $N+M$ if and only if $\phi$ has a satisfying assignment.
  \end{proof}


\section{Perspectives}
We have started a generalized optimality program on segment and string graphs for the most principal graph problems.
On the algorithmic side, we extended a subexponential algorithm for \mis on string graphs \cite{FoxP11} to two other problems: \Coloring{3} and \fvs.
On the complexity side, we showed that no subexponential algorithm is likely for, among others, \Coloring{4}, \mds, and \mids. 
It is quite easy to obtain such lower bounds for string graphs.
Extending those results to segments requires more ingenuity, and even more so when it comes to unit segments. 

A handful of questions remains unsettled.
Can we improve the algorithm or give tight ETH lower bounds for the following problems: \mis without geometric representation, \Coloring{3}, and \fvs on segments/strings?
Can we show for \mcli the same lower bound for segment graphs as we have for string graphs.
The mere NP-hardness of \mcli on segments \cite{CabelloCL13} answered a long-standing open question.
Hence, it is likely that getting a tight ETH hardness will be difficult.
We would also find interesting to have, for a certain problem, an algorithm for segments (resp. unit segments) which beats the ETH lower bound on strings (resp. segments).
So far, we only have candidate problems for such a \emph{``separation''}.

Finally, another natural continuation of this work is to determine which fixed-parameter tractable problems can be solved in time $O^*(2^{\tilde{O}(k^{2/3})})$ or $O^*(2^{\tilde{O}(\sqrt k)})$, and which W[1]-hard problems can be solved in time $f(k)n^{O(\sqrt k)}$ on segments and strings.
For instance, \mvc can be solved in time $O^*(2^{\tilde{O}(k^{2/3})})$ (even in time $O^*(2^{\tilde{O}(\sqrt k)})$ if a geometric representation is given with $O^*(2^{\tilde{O}(\sqrt k)})$ intersections) on string graphs due to the linear kernel yielding an equivalent instance on $2k$ vertices and the algorithm for \mis.
The latter problem can be solved in $n^{O(\sqrt k)}$ in segments or more generally in polygons of polynomial complexity \cite{MarxP15}, while \mds on string graphs cannot be solved in time $f(k)n^{o(k)}$, for any computable function $f$, unless the ETH fails (since this lower bound holds for split graphs).

\bibliographystyle{abbrv}

\end{document}